\documentclass[12pt]{article}

\usepackage{amsmath}
\usepackage{graphicx}
\usepackage{amsfonts}
\usepackage{color}
\usepackage{amssymb}
\newtheorem{defin}{\sc {Definition}}[section]
\newtheorem{definition}[defin]{\sc {Definition}}
\newtheorem{theorem}[defin]{\sc {Theorem}}
\newtheorem{lemma}[defin]{\sc {Lemma}}

\newtheorem{proposition}[defin]{\sc {Proposition}}

\newtheorem{remark}[defin]{\sc {Remark}}
\newtheorem{example}[defin]{\sc {Example}}
\newenvironment{proof}[1][Proof]{\textbf{#1:} }{\ \rule{0.5em}{0.5em}}
\newcommand{\numbercellong}[2]
{
\begin{picture}(80,20)(0,0)
\put(0,0){\framebox(80,20)} \put(40,10){\makebox(0,0){#1}}
\end{picture}
}

\newcommand{\ep}{\varepsilon}

\newcommand{\dR}{{{\bf R}}}
\newcommand{\E}{{{\bf E}}}

\begin{document}
\title{Cooperation under Incomplete Information on the Discount Factors%
\thanks{The work of Solan was supported by the Israel Science Foundation, Grant \#212/09.}}
\author{Cy Maor\thanks{Institute of Mathematics, The Hebrew University, Jerusalem 91904, Israel.
              e-mail: cy.maor@mail.huji.ac.il}  and Eilon Solan\thanks{The School of Mathematical Sciences,
Tel Aviv University, Tel Aviv 69978, Israel. e-mail:
eilons@post.tau.ac.il}}

\maketitle

\begin{abstract}
In repeated games, cooperation is possible in
equilibrium only if players are sufficiently patient, and long-term gains from cooperation outweigh
short-term gains from deviation. What happens if the players have
incomplete information regarding each other's discount factors? In this
paper we look at repeated games in which each player has incomplete
information regarding the other player's discount factor, and ask
when full cooperation can arise in equilibrium. We provide
necessary and sufficient conditions that allow full cooperation in
equilibrium that is composed of grim trigger strategies,
and characterize the states of the world in which full cooperation occurs.
We then ask whether these
``cooperation events'' are close to those in the complete
information case, when the information on the other player's discount factor is ``almost'' complete.
\end{abstract}

\section{Introduction}

Cooperation is an important theme in human behavior.
In repeated interactions, cooperation is often achieved by threats:
if a player deviates from an agreed upon plan,
the other players will punish him.
The effectiveness of the threat is determined by the loss that the deviator incurs if the punishment is executed:
if the loss is high, the threat is effective.

Controlling behavior by threats is possible only if the players assign high enough weight to future payoffs,
that is, if they are sufficiently patient, and future payoffs sufficiently affect their overall utility.
This implies that sometimes cooperation cannot arise in equilibrium:
if a player is supposed to cooperate but is not patient, he will prefer to deviate and obtain short-term profits.
The same will happen if a player believes that some other player is not patient;
in this case, the player will believe that that player will not cooperate,
and therefore he will have no reason to cooperate as well.

In the present paper we study a model in which each of two players does not
know how patient the other player is; that is, the players have
incomplete information regarding each other's discount factor.
We ask whether and when full cooperation can arise in equilibrium,
or, more exactly, what
should be the players' beliefs so that cooperation arises in equilibrium.
To develop our ideas we consider the repeated Prisoner's Dilemma (see
Figure 1), which is a classical game that is used to study issues
relating to cooperation.

\centerline{
\begin{picture}(215,70)(-15,0)
\put(-15,8){Cooperate}
\put( 5,28){Defect}
\put( 65,50){Defect}
\put(140,50){Cooperate}
\put( 40, 0){\numbercellong{$0,4$}{}}
\put(40,20){\numbercellong{$1,1$}{}}
\put(120,0){\numbercellong{$3,3$}{}}
\put(120,20){\numbercellong{$4,0$}{}}
\end{picture}}
\label{figure1}

\smallskip

\centerline{Figure 1: The Prisoner's Dilemma}
\bigskip

In this game, both players have two actions, Cooperate ($C$) and Defect ($D$).
The action $D$ strongly dominates the action $C$ in the one-shot game.
If the players wish to cooperate, they may want to play $(C,C)$ at every stage,
thereby obtaining the payoff $(3,3)$ at every stage.
By deviating to $D$, a player profits 1 at the stage of deviation.
This deviation will not be profitable if the total loss that the player will incur due to this deviation is higher than 1.

There are many ways in which a player can punish the other player for deviating.
One way uses the \emph{grim trigger} strategy: if the other player deviates at stage $t$ and plays $D$ instead of $C$,
punish him by playing $D$ in all future stages.
If a player employs the grim trigger strategy,
then the deviator loses at least 2 in every stage following a deviation.
In particular, as soon as the discount factor of the deviator is at least $\frac{1}{3}$,
the deviation is not profitable.
Another way to punish a deviator is by using the \emph{tit-for-tat} strategy, in which at every stage the player plays the action
that the other player played in the previous stage.
It turns out that this punishment deters deviations as soon as the discount factor of the deviator is at least $\frac{1}{3}$ as well.

These constructions yield equilibrium when the players' discount factors
are common knowledge.
In this case, each player knows how much the other player stands to lose for deviating,
and therefore he knows whether the other player will be willing to cooperate.
In the present paper we study two-player repeated games
in which the players have incomplete information regarding each other's discount factor.
Our goal is to study when full cooperation may arise in equilibrium;
that is, what are the beliefs of the players
that allow play paths in which the players cooperate all through the game.

To illustrate the complexity of the analysis of this problem,
suppose that the discount factors of the players are not common knowledge.
A naive strategy profile is one in which at the first stage each player signals
whether or not he is willing to cooperate, and in subsequent stages
the players cooperate only if both expressed willingness to cooperate at the first stage.
Signalling at the first stage is achieved by playing $C$ if one's discount factor is at least $1\over3$,
and playing $D$ otherwise.
This profile may not be an equilibrium,
because if a player's discount factor is above $1\over3$,
but he assigns high probability to the event that his opponent's discount factor is lower than $1\over3$,
then the player has incentive to deviate and play $D$ at the first stage.

As mentioned before, it is easier to support cooperation when the punishment is severe.
Our goal is to focus on the conditions on the beliefs of a player
that are needed for cooperation in equilibrium.
We will therefore consider
the most severe punishment, which is achieved by the grim trigger
strategy; that is, if a player deviates, he is punished in all
subsequent stages. We thus allow each player to be of one of two
types: a non-cooperative type who always defects, and a
cooperative type who follows the grim trigger strategy. As we show
through an example (see Example \ref{example new}), the grim
trigger strategy is not necessarily the strategy that yields the
most widespread cooperation, that is, cooperation in the largest
set of states of the world. We choose to focus on the grim trigger
strategy because, due to its simplicity, it allows us to focus on
the players' beliefs, even under complex information structures.

Under the grim trigger strategy profile, if the player's discount factor is below $1\over3$, then he will not cooperate,
and will be of the non-cooperative type.
The player will also be of the non-cooperative type if he assigns sufficiently high probability
to the event that his opponent is of the non-cooperative type.
He will also refuse to cooperate if he assigns high probability to the event that the opponent
assigns high probability to him being of the non-cooperative type, etc.
Thus, cooperation can arise only when both players' discount factor is at least $1\over3$,
each player assigns sufficiently high probability to the event that the other player's discount factor is
at least $1\over3$,
each player assigns sufficiently high probability to the event that the other player assigns sufficiently
high probability to the event that his (the player's) discount factor is at least $1\over3$, etc.
In other words, an infinite list of requirements must hold so that full cooperation will arise.

Not surprisingly, the infinite list of requirements
necessary to support an equilibrium can be summarized by two conditions.
So that full cooperation arises in equilibrium one should require that
(a) each player's discount factor is at least $1\over3$,
and (b) each player assigns a sufficiently high probability that the other player is going to cooperate.
For the conditions to be sufficient, we need to add a third condition, (c) that
whenever a player does not cooperate, he assigns a sufficiently high probability to the event that the other player is not going to cooperate.
Interestingly, the probability in (b) is not constant, but depends on the player's
discount factor:
the higher his
discount factor, the more the player will lose if cooperation is not achieved.
Therefore, a player with high
discount factor will cooperate in situations where he wouldn't
have cooperated if his discount factor were lower.

To model incomplete information on the discount factor
we use a Harsaniy's game with incomplete information,
where the state of nature is the pair of the players' discount factors, and the concept of Bayesian equilibrium.%
\footnote{The results would not change if we used perfect Bayesian
equilibria instead of Bayesian equilibria.}
A pair of events $(K_1,K_2)$ is a pair of \emph{cooperation events}
if the strategy pair in which each player $i$ plays the grim
trigger strategy on $K_i$ and always defects on the complement of
$K_i$ is a Bayesian equilibrium. That is, when cooperation events
occur, full cooperation is possible in equilibrium.

We start by providing a complete characterization of pairs of cooperation events.
We then define
a new concept of $f$-belief, which is a generalization of Monderer and Samet's (1989)
concept of $p$-belief:
when $f$ is a real-valued function defined over the set of states of the world,
an event $A$ is an $f$-belief of player $i$ at the state of the world $\omega$
if the conditional probability that player $i$ assigns to $A$ at $\omega$ is at least $f(\omega)$.
This concept is closely related to the cooperation condition (b) mentioned above,
since in order to cooperate each player needs to assign a sufficiently high probability that the other player is going to cooperate,
a probability that depends on his own discount factor, and therefore on the state of the world;
that is, the player has to $f$-believe that the other player is going to cooperate, for a certain function $f$.
Using the notions of iterated $f$-belief and common $f$-belief we provide an iterative construction
of the largest pair of cooperation events.

The characterization of cooperation events can be
generalized for other two-player games. For $2\times 2$ games the
results remain fairly the same, with an appropriate
adjustments of conditions (a)--(c).
In the case of larger games, while (the equivalent of) conditions (a)--(c) still characterize cooperation events,
our construction of the largest pair of cooperation events is no longer valid.

Another solution concept that was studied in the literature
is that of \emph{interim correlated rationalizabiliy} (ICR), see, e.g., Dekel, Fudenberg, and Morris (2007) and Weinstein and Yildiz (2007).
With respect to this concept the results have the same flavor, yet condition (c) is not necessary for a grim trigger strategy to be rationalizable.
In this case the construction of cooperation events via $f$-beliefs is simpler and can be easily extended to games larger than $2\times 2$.

A natural question is whether, when there is almost complete information on the discount factors,
there are cooperation events that are close to the cooperation events in the case of complete information.
Monderer and Samet (1989) ensure that there is an equilibrium that coincide with the complete information equilibrium in most states of the world,
when the information is almost complete in a certain sense.
However, in their setting there are always ``unmapped areas'' in this almost complete information equilibrium;
that is, states of the world in which the strategy is not specified.
We ask whether, in an almost complete information setting, there is an equilibrium in
 grim trigger strategies that is defined in every state of the world and is close to the complete information equilibrium.
It turns out that the answer depends on the definition of almost complete information.
We will provide two natural definitions for this concept;
in one, similar to Mondrer and Samet (1989),
the answer is positive for the prisoner's dilemma and a certain class of other games,
but generally it is negative, while in the other it is positive for all games.

Our analysis demonstrates the significance of the concept of $f$-belief to the study of cooperation
in the presence of incomplete information.
Plainly the decision whether or not to cooperate depends on the whole hierarchy of belief of the player;
our analysis reveals that it crucially depends on
the exact discount factor of the player, and not only,
say, on whether it is above or below some fixed threshold, as is the case in the games with complete information.
Finally, our analysis shows that the details of the definition of almost complete information
drastically affect the continuity of the largest pair of cooperation events as a function of the beliefs of the players,
and we point out at one possible definition of the concept that guarantees the continuity of this function.

The literature regarding repeated games with different discount factors, and specifically with incomplete information regarding them,
is quite scant.
The most thorough analysis of repeated games with different discount factors with \emph{complete} information that we are aware of is
Lehrer and Pauzner (1999) who characterize the equilibrium payoffs in two-player games and show that
the set of feasible payoffs in the repeated game is typically larger than the convex hull of the underlying stage-game payoffs.
Lehrer and Yariv (1999) analyze the case of two-player zero-sum repeated game with one-sided incomplete information regarding the payoff matrix
in which the discount factors are common knowledge.
The analysis closest to ours is Blonsky and Probst (2008),
who deal with a two-player game with incomplete information regarding the discount factors.
Their paper characterizes the efficient equilibria and the Pareto frontier of the payoffs in a game that is related to the Prisoner's Dilemma,
but with more strategies in the stage game, which correspond to different levels of cooperation or trust.
Our paper differs from Blonsky and Probst (2008) in several aspects.
First, we deals with a different class of games that do not include a mechanism of gradually building trust embedded.
Second, while Blonsky and Probst (2008) takes a simple information structure and ask what are the effective equilibria,
we take a simple and natural strategy (grim trigger) and check checks when it induces an equilibrium
for a given information structure.
In other words, while Blonsky and Probst (2008) show how an equilibrium with cooperation look like in a specific game and information structure,
we study the requirements on the players' beliefs that guarantee the existence of an equilibrium in grim trigger strategies.

Other papers that deal with cooperation in repeated games with incomplete information,
albeit in settings quite different from ours, are Watson (1999,2002), who analyze the building of trust in a two-player game
with incomplete information regarding the payoff matrix,
somewhat similar to the game in Blonsky and Probst (2008),
and Kajii and Morris (1997) and Chassang and Takahashi (2011),
who investigate the robustness of equilibria in repeated games to a small amount of incomplete information on the payoff matrix.
Chassang and Takahashi (2011) also deal specifically with the Prisoner's Dilemma and show
that the grim trigger equilibria are not the most robust way to sustain cooperation.

The paper is organized as follows.
In Section \ref{section model} we present the repeated Prisoner's Dilemma with
incomplete information on the discount factor, and the concept of cooperation events.
In Section \ref{section characterization} we characterize pairs of cooperation events.
In Section \ref{beliefsec} we present the concepts of $f$-belief, common $f$-belief and iterated $f$-belief,
and characterize the largest pair of cooperation events in the repeated Prisoner's Dilemma using these notions.
Examples are provided in Section \ref{section examples}.
In Section \ref{section generalizations} we present
several extensions and additional results,
including general two-player games with incomplete information on the other player's discount factor,
 the analysis of almost complete information,
 and another possible application of the concept of $f$-belief,
 to define sufficient conditions for ``combining'' two different complete information equilibria into an incomplete information equilibrium,
 not necessarily in a repeated game.

\section{The Model}
\label{section model}

Consider the repeated Prisoner's Dilemma with incomplete
information on the discount factors. That is, the set of states of
nature is $S = [0,1)^2$; a state of nature $s =
(\lambda_1,\lambda_2) \in S$ indicates the discount factors
$\lambda_1$ and $\lambda_2$ of the two players 1 and 2
respectively. An arbitrary player will be denoted by $i$; when $i$
is a player, $j$ denotes the other player.

We consider a general (that is, not necessarily finite) set of states of the world $\Omega$,
equipped with a $\sigma$-algebra $\Sigma$.
The function $\lambda  = (\lambda_1,\lambda_2) : \Omega \to S$
indicates the state of nature that corresponds to each state of the world;
thus, $\lambda_i(\omega) \in [0,1)$ is the discount factor of player $i$ at the state of the world $\omega$.

The belief of each player $i$ is given by a function $P_i : \Omega \to \Delta(\Omega)$.
We denote by $P_i(E\mid\omega)$ the probability that player $i$ ascribes to the event $E \in \Sigma$
at the state of the world $\omega$, and by $\E_i(\cdot\mid\omega)$ the corresponding expectation operator.
The function $P_i$ is measurable in the sense that, for every $E \in \Sigma$,
the function $\omega \mapsto P_i(E\mid\omega)$ is measurable.
This function is required to be consistent, in the sense that each player knows his belief:
$P_i(\{\omega':P_i(\omega)=P_i(\omega')\}\mid\omega)=1$, for every $\omega\in\Omega$.
We assume throughout%
\footnote{Similar though slightly weaker results to the ones we obtain below can be proven when the players have incomplete
information regarding their own discount factor as well as the other player's discount factor.
However, the assumption that a player knows his own discount factor is natural in many applications,
and it allows us to concentrate on the impact of incomplete information on the other player's discount factor.}
that each player $i$ knows his own discount factor in every state of the world $\omega$:
$P_i(\{\omega':\lambda_i(\omega')=\lambda_i(\omega)\}\mid\omega)=1$ for every $\omega\in\Omega$.
The triplet $(\Omega,\Sigma,(P_1,P_2))$ is called a {\em belief space}.

The {\em type} of player $i$ at the state of the world $\omega$
is the set of all states of the world that are indistinguishable
from the true state of the world, given the player's information;
formally, it is the set
$\{\omega'\mid P_i(\omega')=P_i(\omega)\}$.

To allow general information structures, we assume that the information that a player has
is described by a sub-$\sigma$-algebra $\Sigma_i$ of $\Sigma$,
and that both $\lambda_i$ and $P_i$ are $\Sigma_i$-measurable.
Thus, each player knows his discount factor and his beliefs.
The $\sigma$-algebra $\Sigma_i$ identifies the events that can be described by player $i$.

The game is played as follows:
at the outset of the game,
a state of the world $\omega \in \Omega$ is realized, each player $i$ learns his type, and therefore also his discount factor $\lambda_i(\omega)$.
Then the players repeatedly play the Prisoner's Dilemma that appears in Figure \ref{figure1} (see page \pageref{figure1}).

A \emph{course of action} of player $i$ is a function that assigns a
mixed action of player $i$ to each finite history of actions in
the repeated Prisoner's Dilemma.
This is a strategy for the player given his type.

Two simple courses of action of player $i$ are the {\em always defect} course of action $D^*_i$,
in which he always plays $D$,
and the {\em grim trigger} course of action $GT^*_i$,
in which he cooperates at the first stage, and in every subsequent stage he cooperates
only if the other player cooperated in all previous stages.

A \emph{strategy} of player $i$ is a $\Sigma_i$-measurable%
\footnote{So that $\Sigma_i$-measurability of strategies is well defined,
one needs to introduce a topological structure on the space of courses of actions.
In this paper we will study only a simple type of strategies, conditional grim trigger strategies,
which is defined below (Definition \ref{grimdef}),
and the measurability issue will not arise.}
function $\eta_i$ that assigns a course of action $\eta_i(\omega)$
to each state of the world $\omega$.

The {\em subjective payoff} of player $i$ under
the strategy profile $(\eta_i,\eta_j)$, conditional on the state
of the world $\omega$, is
$\gamma_i(\eta_i,\eta_j\mid\omega)=\E_i(\sum_{t=1}^\infty(\lambda_i(\omega))^{t-1}u_{i,t}\mid\omega)$,
where $u_{i,t}$ is player $i$'s payoff at stage $t$.
Note that the subjective payoff depends on $\omega$ in three ways:
first, player $i$'s discount factor depends on $\omega$;
second, the course of action that player $i$ takes depends on $\omega$;
and third, the belief of player $i$ about player $j$'s type, and therefore about player $j$'s course of action,
also depends on $\omega$.

In the complete information case both players know each other's discount factor.
In this case, the grim trigger course of action is an equilibrium if and only if the discount factors of both players
are at least $\lambda_i^0:={1\over3}$.
Thus, the players will cooperate in some states of the world,
and will never cooperate in the remaining states of the world.
This observation leads us to the following definition of
a {\em conditional grim trigger} strategy,
in which the player plays a grim trigger course of action in some states of the world,
and always defects in the rest of the states of the world.
\begin{definition}
\label{grimdef}
Let $i \in \{1,2\}$ be a player, and let $K_i\subseteq\Omega$ be a
$\Sigma_i$-measurable set.
The {\em conditional grim trigger strategy with cooperation region $K_i$} for player $i$,
denoted by $\eta_i^*(K_i)$, is the strategy defined by
$$
\eta^*_i(K_i,\omega)=
\left\{\begin{array}{lll}
GT^*_i&\ \ \ \ \ & \hbox{if }\omega\in K_i,\\
D^*_i& & \hbox{if }\omega\notin K_i.
\end{array}
\right.
$$
\end{definition}

If a player plays a conditional grim trigger strategy,
his action at the first stage reveals whether he follows
the grim trigger course of action or the ``never cooperate'' course of action.

If $\eta^*(K_1,K_2) := (\eta^*_1(K_1),\eta^*_2(K_2))$ is a Bayesian equilibrium,
then whenever $\omega \in K_1\cap K_2$ the players will cooperate all along the game.

\begin{definition}
\label{defin cooperation}
The pair of events $(K_1,K_2)$ is called a pair of \emph{cooperation events}
if $\eta^*(K_1,K_2) = (\eta^*_1(K_1),\eta^*_2(K_2))$ is a Bayesian equilibrium,
\end{definition}

As mentioned in the introduction, the significance of cooperation events is that they enable cooperation
all through the game, without the need to exchange information.
As Example \ref{example new} below shows, there are equilibria in which the players cooperate
from some stage on, after a short signalling period,
even when there are no non-trivial cooperation events.

The pair $(\emptyset,\emptyset)$ is a pair of cooperation events in which the players never cooperate.
Moreover, if $(K_1,K_2)$ are cooperation events and $K_i=\emptyset$, then $K_j=\emptyset$.
Indeed, if $K_i=\emptyset$ then player $i$ never cooperates,
so that player $j$'s dominant strategy is never to cooperate.

Note that there may be two pairs of non-empty yet disjoint cooperation events.
Indeed, given a belief space with a pair of non-empty cooperation events $(K_1,K_2)$,
consider an auxiliary belief space that includes two independent copies of the original belief space.
Then the two copies of $(K_1,K_2)$ are two pairs of disjoint cooperation events in the auxiliary belief space.

The larger the sets $(K_1,K_2)$, the higher the probability that cooperation will occur.
If there are several pairs of sets $(K_1,K_2)$ for which $\eta^*(K_1,K_2)$ is a Bayesian equilibrium,
then using pre-play communication or a mediator the players can agree on the Bayesian equilibrium
with largest pair of cooperation events $(K_1,K_2)$.
Our main goal is to characterize the cooperation events,
and to provide an algorithm for calculating the largest pair of cooperation events $(K_1,K_2)$.

\section{Characterization of the Cooperation Events}
\label{section characterization}

We start by characterizing cooperation events in the (simple) benchmark case of complete information,
where the discount factors are common knowledge among the players.
We will then analyze the general case with incomplete information.

When the game has complete information,
in equilibrium each player knows the course of action that the other player will take:
a player cooperates if and only if his discount factor is at least $\lambda_i^0=\frac{1}{3}$
and the other player cooperates.
In particular, if $\eta^*(K_1,K_2)$ is a Bayesian equilibrium then necessarily $K_1=K_2$:
We therefore obtain the following characterization for cooperation events in the complete information case.

\begin{theorem}
\label{theorem complete}
When the game has complete information, $(K_1,K_2)$ is a pair of cooperation events
if and only if $K_1=K_2\subseteq\Lambda$, where
$\Lambda:=\Lambda_1\cap\Lambda_2$
and $\Lambda_i := \{\omega \in \Omega \colon \lambda_i(\omega)\ge\lambda_i^0\}$ for $i \in \{1,2\}$.
\end{theorem}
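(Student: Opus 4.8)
The plan is to reduce the entire statement to one best-reply computation against the grim trigger course of action, exploiting that under complete information each player knows, at every state of the world $\omega$, the pair $(\lambda_1(\omega),\lambda_2(\omega))$ and, in particular, which course of action the opponent will follow there. So I only need to understand the best replies of a player $i$ with discount factor $\lambda_i$ to $GT^*_j$. The point is that $GT^*_j$ reacts solely to whether player $i$ has ever played $D$: once player $i$ has defected even once, the opponent defects forever and then $D$ strictly dominates $C$ at every remaining stage, while as long as player $i$ has never defected the opponent cooperates. Consequently every course of action of player $i$, played against $GT^*_j$, induces a (possibly random) ``first defection stage'' $\tau\in\{1,2,\dots,\infty\}$ and yields the average over $\tau$ of the payoffs of the pure courses of action ``play $C$ at stages $1,\dots,t-1$ and $D$ from stage $t$ on''; so it suffices to compare those. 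A direct discounting computation shows that the payoff of ``defect from stage $t$'' against $GT^*_j$ minus the payoff $\tfrac{3}{1-\lambda_i}$ of mutual cooperation equals $\lambda_i^{t-1}\cdot\frac{1-3\lambda_i}{1-\lambda_i}$, which is $\le 0$ for every $t\ge 1$ precisely when $\lambda_i\ge\lambda_i^0=\tfrac13$, and is strictly positive for $t=1$ when $\lambda_i<\tfrac13$. Together with the trivial observation that $D^*_i$ is the unique (strict) best reply to $D^*_j$, this supplies all the ammunition.

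For the ``if'' direction, assume $K_1=K_2=:K\subseteq\Lambda$ and check that $\eta^*(K,K)$ is a Bayesian equilibrium by verifying optimality of $\eta^*_i(K,\omega)$ for each player $i$ and each $\omega$: if $\omega\notin K$ then player $j$ plays $D^*_j$ and $\eta^*_i(K,\omega)=D^*_i$ is optimal; if $\omega\in K$ then player $j$ plays $GT^*_j$ and, since $\omega\in K\subseteq\Lambda\subseteq\Lambda_i$ gives $\lambda_i(\omega)\ge\tfrac13$, the computation above shows $\eta^*_i(K,\omega)=GT^*_i$ has no profitable deviation. For the ``only if'' direction, assume $\eta^*(K_1,K_2)$ is a Bayesian equilibrium. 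If $\omega\notin K_j$ then player $j$ plays $D^*_j$, whose unique best reply is $D^*_i$, so optimality forces $\eta^*_i(K_i,\omega)=D^*_i$, i.e.\ $\omega\notin K_i$; thus $K_i\subseteq K_j$ and by symmetry $K_1=K_2=:K$. If $\omega\in K$ then both players play grim trigger, and if $\lambda_i(\omega)<\tfrac13$ for some $i$ then the $t=1$ deviation is strictly profitable for player $i$, contradicting equilibrium; hence $K\subseteq\Lambda_1\cap\Lambda_2=\Lambda$.

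I expect the difficulty to lie not in depth but in care. The two places that call for a genuine (if short) argument rather than arithmetic are: (i) the reduction of arbitrary --- possibly mixed, possibly history-dependent --- deviations against $GT^*_j$ to the one-parameter family ``defect from stage $t$''; and (ii) making precise what complete information buys, namely that $P_i(\cdot\mid\omega)$ puts probability one on the actual course of action of player $j$, so that the state-by-state best-reply comparisons above are legitimate. Depending on whether the Bayesian equilibrium notion requires optimality at every $\omega$ or only $P_i$-almost surely, the conclusions $K_1=K_2$ and $K\subseteq\Lambda$ are to be read as exact equalities/inclusions or as holding up to $P_i$-null sets.
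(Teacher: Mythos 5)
Your argument is correct, but it is organized differently from the paper's. The paper does not prove Theorem \ref{theorem complete} directly: it proves the general incomplete-information characterization (Theorem \ref{conditionprop}, with the payoff computations in the appendix) and then obtains the complete-information statement as a special case, via the observation that under complete information $P_i(K_j\mid\omega)\in\{0,1\}$, so condition (b) forces $K_1=K_2$ and, together with $f_i(\omega)>1$ off $\Lambda_i$ and $0<f_i\le 1$ on $\Lambda_i$, conditions (a)--(c) reduce exactly to $K_1=K_2\subseteq\Lambda$. You instead give a self-contained best-reply analysis: reduce all deviations against $GT^*_j$ to the one-parameter family ``defect from stage $t$'' (since after the trigger is pulled $D$ dominates stage-wise), compute the payoff gap $\lambda_i^{t-1}\frac{1-3\lambda_i}{1-\lambda_i}$, and use the strict loss of $GT^*_i$ against $D^*_j$ to force $K_1=K_2$. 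Your route has the advantage of not presupposing the general theorem and of making the reduction over all deviation stages explicit (the appendix proof of Theorem \ref{conditionprop} argues more briefly that the best deviation occurs at stage 1); the paper's route is shorter given Theorem \ref{conditionprop} and isolates exactly what completeness of information contributes, namely degenerate conditional probabilities -- which is the same point you flag in your item (ii). Two small imprecisions in your write-up are harmless: a deviation yields \emph{at most} (not exactly) the average over $\tau$ of the ``defect from stage $t$'' payoffs, which is what your dominance observation actually delivers and is all you need; and $D^*_i$ is the unique best reply to $D^*_j$ only up to off-path behavior, but you only use the strict comparison between $D^*_i$ and $GT^*_i$, which is valid. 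Your closing caveat about ``optimality at every $\omega$ versus $P_i$-almost surely'' matches the level of precision at which the paper itself treats the complete-information benchmark.
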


When information is incomplete
the characterization of cooperation events is more subtle,
and is described by the following theorem.

\begin{theorem}
\label{conditionprop}
Let $K_i\subseteq\Omega$ be a $\Sigma_i$-measurable event for each $i=1,2$.
The pair $(K_1,K_2)$ is a pair of cooperation events if and only if, for each $i=1,2$
\begin{enumerate}
\item[(a)] $K_i\subseteq\Lambda_i$;
\item[(b)] $P_i(K_j\mid\omega)\ge f_i(\omega)$ for every $\omega\in K_i$; and
\item[(c)] $P_i(K_j\mid\omega)\le f_i(\omega)$ for every $\omega\notin K_i$;
\end{enumerate}
where $f_i$ is the $\Sigma_i$-measurable function defined by $f_i(\omega):=\frac{1-\lambda_i(\omega)}{2\lambda_i(\omega)}$.
\end{theorem}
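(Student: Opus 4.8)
\quad The plan is to reduce the requirement that $\eta^*(K_1,K_2)$ be a Bayesian equilibrium to a family of one-type best-reply problems and then to solve each of them explicitly. Fix a player $i$, write $j$ for the other player, fix a state of the world $\omega$, and set $\lambda:=\lambda_i(\omega)$ and $p:=P_i(K_j\mid\omega)$; by the measurability assumptions both are $\Sigma_i$-measurable functions of $\omega$. The structural observation driving everything is that, since player $j$ uses the conditional grim trigger strategy $\eta^*_j(K_j)$, at the first stage player $j$ plays $C$ on $K_j$ and $D$ on its complement; hence, after observing player $j$'s first-stage action, player $i$ already knows whether the realized state lies in $K_j$, and therefore whether from the second stage on player $j$ will keep following the grim trigger course of action (punishing any defection of $i$ forever) or will simply defect forever. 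Consequently player $i$'s decision problem splits into the choice of a first-stage action followed by a complete-information single-agent continuation problem against a known opponent, and the equilibrium condition becomes: for every $\omega$, the course of action $\eta^*_i(K_i,\omega)$ is a best reply against $\eta^*_j(K_j)$ under the belief $P_i(\cdot\mid\omega)$.

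I would next compute the values of the relevant plans as functions of $(\lambda,p)$. Against the always-defect opponent the optimal continuation is to defect forever; against the grim-trigger opponent, once player $i$ has cooperated up to the current stage, cooperating forever is worth $\tfrac{3}{1-\lambda}$, whereas first defecting after $k\ge1$ further stages of cooperation is worth $\tfrac{3+\lambda^{k}(1-3\lambda)}{1-\lambda}$, and playing $C$ after one's own defection is never profitable. Hence the cooperative continuation is optimal exactly when $\lambda\ge\tfrac13$, and otherwise the best is to defect at the first opportunity. Folding in the first-stage payoff, this gives, for a type $(\lambda,p)$: the value of $GT^*_i$ (play $C$, then grim-trigger) is $V:=\tfrac{3p+(1-p)\lambda}{1-\lambda}$; the value of $D^*_i$ is $W:=p\bigl(4+\tfrac{\lambda}{1-\lambda}\bigr)+(1-p)\tfrac{1}{1-\lambda}$; moreover $D^*_i$ is the best among all plans that begin with $D$, while the best plan that begins with $C$ equals $V$ when $\lambda\ge\tfrac13$ and equals $V+p\,\tfrac{\lambda(1-3\lambda)}{1-\lambda}$ when $\lambda<\tfrac13$ --- in the latter case strictly above $V$ once $p>0$, yet always strictly below $W$ (indeed $W$ minus it equals $1-3p\lambda>0$). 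Therefore the optimal value over all courses of action is $\max(V,W)$ when $\lambda\ge\tfrac13$ and is $\max\bigl(W,\,V+p\tfrac{\lambda(1-3\lambda)}{1-\lambda}\bigr)$ when $\lambda<\tfrac13$; in both cases it is at least $W$.

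The two directions now follow by bookkeeping, using the single algebraic identity $(1-\lambda)(V-W)=\lambda(1+2p)-1$. For $\omega\in K_i$, requiring $GT^*_i$ (value $V$) to be optimal is, by the previous paragraph, equivalent to ``$\lambda\ge\tfrac13$ and $V\ge W$'': if $\lambda<\tfrac13$ then $W>V$ (since $\lambda(1+2p)\le3\lambda<1$), so $GT^*_i$ loses to $D^*_i$, while if $\lambda\ge\tfrac13$ then $GT^*_i$ is optimal iff $V\ge W$. By the identity, $V\ge W\iff\lambda(1+2p)\ge1\iff 2\lambda p\ge1-\lambda\iff p\ge f_i(\omega)$, and $\lambda(1+2p)\ge1$ already forces $\lambda\ge\tfrac13$ because $p\le1$; this is precisely conditions (a) and (b) at $\omega$. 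For $\omega\notin K_i$, requiring $D^*_i$ (value $W$) to be optimal means that $W$ equals the optimal value, which by the previous paragraph holds iff $\lambda<\tfrac13$ or $W\ge V$; since $\lambda<\tfrac13$ gives $f_i(\omega)=\tfrac{1-\lambda}{2\lambda}>1\ge p$, this is in all cases equivalent to $p\le f_i(\omega)$, i.e.\ condition (c) at $\omega$. Putting the two players together, and noting that $f_i$ and the strategies $\eta^*_i(K_i)$ are $\Sigma_i$-measurable because $\lambda_i$ and $K_i$ are, yields the theorem.

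I expect the only genuinely delicate point to be the continuation analysis against the grim-trigger opponent when $\lambda<\tfrac13$: one has to confirm that the best deviation from cooperation is to defect at the earliest possible stage and that resuming cooperation after one's own defection is never worthwhile. But this is exactly the complete-information grim trigger computation underlying Theorem \ref{theorem complete}, so it poses no real difficulty --- everything else is elementary algebra on the closed forms for $V$ and $W$.
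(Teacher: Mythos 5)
Your proof is correct and takes essentially the same route as the paper's appendix: an interim, state-by-state best-reply analysis in which player $j$'s first-stage action reveals whether $\omega\in K_j$, the values of $GT^*_i$ and $D^*_i$ are computed in closed form (your $V$ and $W$ are exactly the paper's expressions for $\gamma_i(\eta^*\mid\omega)$ and $\gamma_i(D^*_i,\eta^*_j\mid\omega)$), and comparing them yields the threshold $f_i$, hence conditions (a)--(c). If anything you are slightly more thorough: for $\omega\notin K_i$ with $\lambda_i(\omega)<\tfrac13$ you explicitly rule out the deviations that begin with $C$ and defect later (showing $W$ exceeds their value by $1-3p\lambda>0$), a class of deviations the paper dispatches by checking only the deviation to $GT^*_i$.
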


Condition (a) is an individual rationality condition:
a player will not cooperate if his own discount factor is not sufficiently high to justify cooperation.
Conditions (b) and (c) capture the whole hierarchy of beliefs that should hold to ensure cooperation.
Condition (b) requires that whenever a player is supposed to cooperate,
he assigns sufficiently high probability that the other player will cooperate;
if this were not the case, the player would find it beneficial not to cooperate at that state of the world.
The cutoff for cooperation depends on the player's discount factor, and therefore it depends on the state of the world.
Indeed, if the player's discount factor is high,
he will be willing to take the risk and cooperate in the first stage even if the probability of the event
that the other player will also cooperate is low,
because the possible loss in the first stage is small due to the high discount factor.
Condition (c) is analogous to condition (b); it requires that whenever a player is not supposed to cooperate,
he assigns low probability to the event that the other player will cooperate;
otherwise, the player would prefer to cooperate.
Note that condition (c) trivially holds for $\omega \notin \Lambda_i$ because in this case $f_i(\omega) \geq 1$.

The proof of the theorem
appears in Appendix \ref{appendix proof}.

It is interesting to note that Theorem \ref{theorem complete} is a special case of Theorem \ref{conditionprop}.
When information is complete, whether or not the state of the world $\omega$ lies in $K_1$ or $K_2$ is
common knowledge among the players, and therefore $P_i(K_j\mid\omega)$ is either 0 or 1.
If $K_1=K_2$ then the probability in condition (b) is 1 and the probability in condition (c) is 0.
Since $K_i\subseteq\Lambda_i$, we have $f_i(\omega)\le1$ for $\omega\in K_i$. Moreover, $f_i>0$ for every $\omega \in \Omega$.
It follows that in this case conditions (b) and (c) hold.
On the other hand, if $K_1 \neq K_2$ and $\omega\in K_i\setminus K_j$, then $P_i(K_j\mid\omega)=0$, and condition (b) does not hold.
Therefore, as Theorem \ref{theorem complete} states, we deduce that the conditions that ensure that $(K_1,K_2)$ are cooperation events
are $K_1=K_2\subseteq\Lambda_1\cap\Lambda_2$.

\section{Belief Operators}
\label{beliefsec}

In this section we present the concepts of $f$-belief and
common $f$-belief,
which are generalizations of the concepts of $p$-belief and
common-$p$-belief introduced by Monderer and Samet (1989).
These concepts will be useful in the construction of cooperation events.

The definitions and results presented in this section
are valid for general belief spaces with any finite number of players
(see Definition 10.1 in Maschler, Solan, and Zamir (2013) for
a formal definition of a general belief space).

\begin{definition}
\label{definition fbelief}
Let $i$ be a player, and let
$f_i:\Omega \rightarrow \mathbb{R}$ be a $\Sigma_i$-measurable function, let
$A\subseteq\Omega$ be an event, and let $\omega \in A$.
Player $i$ \emph{$f_i$-believes} in the event $A$ at the state of the world $\omega$ if $P_i(A\mid\omega)\ge f_i(\omega)$.
\end{definition}

We say that both players \emph{$f$-believe in an event at the state of the world $\omega$} if player 1 $f_1$-believes in the event at $\omega$
and player 2 $f_2$-believes in the event at $\omega$.
A function $f:\Omega \rightarrow \mathbb{R}^2$ is called \emph{*-measurable} if $f_i$ is $\Sigma_i$-measurable for each $i \in \{1,2\}$.

\begin{definition}
\label{definition fbelief1}
Let $f:\Omega \rightarrow \mathbb{R}^2$ be a *-measurable function,
let
$A\subseteq\Omega$ be an event, and let $\omega \in A$.
The event $A$ is \emph{common $f$-belief} at the state of the world $\omega$ if at
$\omega$ each player $i$ $f_i$-believes in $A$,
each player $i$ $f_i$-believes that both players $f$-believe in $A$,
each player $i$ $f_{i}$-believes that both players $f$-believe that both players $f$-believe in $A$,
etc. ad infinitum.
\end{definition}

If there is $p \in (0,1)$ such that $f_1\equiv p$ and $f_2\equiv p$,
then $f$-belief and common $f$-belief reduce to Monderer and Samet's (1989) $p$-belief
and common $p$-belief, respectively.
If $f_i$ is a constant function $p_i$ for each $i \in \{1,2\}$,
but $p_1$ and $p_2$ may differ, then the concept of $f$-belief reduces to the concept of
$\textbf{p}$-belief of  Morris and Kajii (1997), where $\textbf{p} = (p_1,p_2)$.

An equivalent formulation of conditions (b) and (c) of Theorem
\ref{conditionprop} using the concept of $f_i$-belief, where $(f_i)_{i=1,2}$ are the functions that are defined in the statement of the theorem,
is the following:
\begin{enumerate}
  \item[(b')] $K_j$ is an $f_i$-belief in $K_i$; and
  \item[(c')] $K_j^c$ is a $(1-f_i)$-belief in $K_i^c$.
\end{enumerate}
Thus, the events $(K_1,K_2)$ are cooperation events if
(a) each $K_i$ is a subset of $\Lambda_i$ (individual rationality condition),
(b') at each state of the world in which player $i$ is supposed to cooperate,
he $f_i$-believes that the other  player  is going to cooperate,
and (c') at each state of the world in which player $i$ is not supposed to cooperate,
he $(1-f_i)$-believes that the other  player  is not going to cooperate.

\begin{definition}
Let $i$ be a player, and let
$f_i:\Omega \rightarrow \mathbb{R}$ be a $\Sigma_i$-measurable function.
The \emph{$f_i$-belief operator} of player $i$ is the operator
$B_i^{f_i}:\Sigma \rightarrow \Sigma$ that assigns to each event the
states of the world at which player $i$ $f_i$-believes in the event:
$B_i^{f_i}(A):=\{\omega\in\Omega \mid P_i(A\mid\omega)\ge
f_i(\omega)\}$.
\end{definition}

Because both $f_i$ and $P_i$ are $\Sigma_i$-measurable functions,
the set $B_i^{f_i}(A)$ is in $\Sigma_i$ for every event $A$.

Let $f:\Omega \rightarrow \mathbb{R}^2$ be a *-measurable function.
Using the $f_i$-belief operator we can formally define common $f$-belief.
Define
\begin{eqnarray}
D^{0,f}(C)&:=&C,\\
D^{n+1,f}(C)&:=&B_1^{f_1}(D^{n,f}(C)) \cap B_2^{f_2}(D^{n,f}(C)) \hbox{ for every } n\ge0,\\
D^f(C)&:=&\bigcap_{n\ge 1} D^n(C).
\end{eqnarray}
An event $C$ is \emph{common $f$-belief} at $\omega$ if and only if
$\omega\in D^f(C)$.

The following proposition, which lists several desirable properties that the $f$-belief operator satisfies,
is analogous to results derived in Monderer and Samet (1989).

\begin{proposition}
\label{prop1}
Let $i \in \{1,2\}$ and $f : \Omega \to \dR^2$ be *-measurable. The following statements hold:
\begin{enumerate}
  \item If $A,C\in\Sigma$ and $A\subseteq C$, then $B_i^{f_i}(A)\subseteq B_i^{f_i}(C)$.
  \item If $A\in\Sigma$ then $B_i^{f_i}(B_i^{f_i}(A))=B_i^{f_i}(A)$.
  \item If $(A_n)_{n=1}^\infty$ is a decreasing sequence of events, then
  $B_i^{f_i}(\bigcap_{n=1}^\infty A_n)=\bigcap_{n=1}^\infty B_i^{f_i}(A_n)$.
  \item If $C\in\Sigma$ is a $\Sigma_i$-measurable event,
  then
  \[ B_i^{f_i}(C)=(C\setminus\{\omega\in\Omega\mid f_i(\omega)>1\})\cup\{\omega\in\Omega\mid f_i(\omega)\le0\}. \]
  \item If $f_i>0$ or $\{f_i\le0\}\subseteq C$, then
        $B_i^{f_i}(A)\cap C=B_i^{f_i}(A\cap C)$ for every event $A\in\Sigma$ and every $\Sigma_i$-measurable event $C\in\Sigma$.
\end{enumerate}
\end{proposition}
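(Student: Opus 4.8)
The plan is to derive all five items from a single structural fact about belief spaces: for every $\Sigma_i$-measurable event $C$, every $A\in\Sigma$, and every $\omega\in\Omega$ one has $P_i(A\cap C\mid\omega)=\mathbf{1}_C(\omega)\cdot P_i(A\mid\omega)$; in particular $P_i(C\mid\omega)=\mathbf{1}_C(\omega)$. In words, player $i$ knows whether any event in his information $\sigma$-algebra obtains, and conditioning on such an event is vacuous when it obtains and annihilates the probability when it does not. This is a consequence of the consistency requirement in the definition of a belief space together with the $\Sigma_i$-measurability of $P_i$ (see Definition 10.1 in Maschler, Solan, and Zamir (2013)), so I would quote it and then verify the items in turn. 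Item 1 needs nothing beyond monotonicity of the measure $P_i(\cdot\mid\omega)$: $A\subseteq C$ gives $P_i(A\mid\omega)\le P_i(C\mid\omega)$ for all $\omega$, hence $\omega\in B_i^{f_i}(A)$ implies $\omega\in B_i^{f_i}(C)$. Item 3 is the only place countable additivity enters: the inclusion $B_i^{f_i}(\bigcap_n A_n)\subseteq\bigcap_n B_i^{f_i}(A_n)$ follows from Item 1 applied to $\bigcap_n A_n\subseteq A_m$, and for the reverse inclusion I would use continuity from above of $P_i(\cdot\mid\omega)$ along the decreasing sequence, so that $P_i(A_m\mid\omega)\ge f_i(\omega)$ for all $m$ yields $P_i(\bigcap_n A_n\mid\omega)=\lim_m P_i(A_m\mid\omega)\ge f_i(\omega)$.

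For Item 4 I would split on whether $\omega\in C$. If $\omega\in C$ then $P_i(C\mid\omega)=1$, so $\omega\in B_i^{f_i}(C)$ exactly when $f_i(\omega)\le 1$; if $\omega\notin C$ then $P_i(C\mid\omega)=0$, so $\omega\in B_i^{f_i}(C)$ exactly when $f_i(\omega)\le 0$. Thus $B_i^{f_i}(C)=(C\setminus\{f_i>1\})\cup(\{f_i\le 0\}\setminus C)$, and since $\{f_i\le 0\}\cap C$ is already contained in $C\setminus\{f_i>1\}$ the second piece may be replaced by all of $\{f_i\le 0\}$, which is the asserted identity. Item 2 then drops out: $B_i^{f_i}(A)$ lies in $\Sigma_i$, so applying Item 4 with $C:=B_i^{f_i}(A)$ gives $B_i^{f_i}(B_i^{f_i}(A))=(B_i^{f_i}(A)\setminus\{f_i>1\})\cup\{f_i\le 0\}$; but on $B_i^{f_i}(A)$ we have $f_i(\omega)\le P_i(A\mid\omega)\le 1$, so removing $\{f_i>1\}$ does nothing, and $\{f_i\le 0\}\subseteq B_i^{f_i}(A)$ because $P_i(A\mid\omega)\ge 0\ge f_i(\omega)$ there, so the right-hand side is just $B_i^{f_i}(A)$.

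Item 5 is the step I expect to need the most care, since it is precisely the statement whose failure is caused by the states where $f_i\le 0$. Using the structural fact, for $\omega\in C$ we have $P_i(A\cap C\mid\omega)=P_i(A\mid\omega)$, so $\omega\in B_i^{f_i}(A\cap C)$ iff $P_i(A\mid\omega)\ge f_i(\omega)$ iff $\omega\in B_i^{f_i}(A)$; for $\omega\notin C$ we have $P_i(A\cap C\mid\omega)=0$, so $\omega\in B_i^{f_i}(A\cap C)$ iff $f_i(\omega)\le 0$. Hence $B_i^{f_i}(A\cap C)=(B_i^{f_i}(A)\cap C)\cup(\{f_i\le 0\}\setminus C)$, which coincides with $B_i^{f_i}(A)\cap C$ precisely when $\{f_i\le 0\}\setminus C=\emptyset$; and each of the two hypotheses — $f_i>0$ everywhere, or $\{f_i\le 0\}\subseteq C$ — forces that emptiness. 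The only genuine subtlety in the whole argument is thus the opening fact about $\Sigma_i$-measurable events, and in particular its off-$C$ half $P_i(C\mid\omega)=0$ for $\omega\notin C$; everything after that is bookkeeping with unions, complements, and the monotonicity and continuity of a single probability measure.
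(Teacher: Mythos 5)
Your proof is correct and follows essentially the same route as the paper's: the key observation that $P_i(C\mid\omega)=\mathbf{1}_C(\omega)$ for $\Sigma_i$-measurable $C$ drives parts 4, 2, and 5, while parts 1 and 3 are the standard monotonicity and continuity-from-above arguments (which the paper simply delegates to Monderer and Samet). Your treatment of part 5 via an exact computation of $B_i^{f_i}(A\cap C)$ rather than a two-inclusion argument, and your explicit unification of the two hypotheses through $\{f_i\le 0\}\setminus C=\emptyset$, are only cosmetic variations that fill in details the paper leaves implicit.
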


\begin{proof}
The proof of parts 1 and 3 is similar to
the proof for Proposition 2 in Monderer and Samet (for $p$-belief
operators). To prove part 4, observe that $P_i(C\mid\omega)=1$ for every $\omega\in C$,
and $P_i(C\mid\omega)=0$ for every $\omega\notin C$. Part 2 follows from part 4, because for
every $A\in\Sigma$, $B_i^{f_i}(A)$ is a $\Sigma_i$-measurable event that contains
$\{\omega\in\Omega\mid f_i(\omega)\le0\}$. To prove part 5, assume
$\{f_i\le0\}\subseteq C$. In this case
$B_i^{f_i}(C)=C\setminus\{f_i>1\}$. Assume $\omega\in B_i^{f_i}(A\cap C)$.
From part 1 we have $\omega\in B_i^{f_i}(A)$ and $\omega\in
B_i^{f_i}(C)\subseteq C$. For the other direction, assume
$\omega\in B_i^{f_i}(A)\cap C$. Then $P_i(A\mid\omega)\ge f_i(\omega)$
and $P_i(C\mid\omega)=1$, and therefore $P_i(A\cap C\mid\omega)\ge
f_i(\omega)$; that is, $\omega\in B_i^{f_i}(A\cap C)$.
\end{proof}

\bigskip

By Proposition \ref{prop1}, $B_i^{f_i}$ is a belief operator as defined by Monderer and Samet (1989).
Analogously to Proposition 3 in Monderer and Samet (1989) we obtain the following characterization of common $f$-belief.

\begin{proposition}
\label{prop2}
The event $C$ is a common $f$-belief at the state of the world $\omega$
if and only if
there exists an event $D\in\Sigma$ such that (a) $\omega\in D$, (b)
$D\subseteq B_i^{f_i}(C)$, and (c) $D\subseteq B_i^{f_i}(D)$ for every $i\in
\{1,2\}$.
\end{proposition}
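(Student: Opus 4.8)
The plan is to follow the classical argument of Monderer and Samet (1989, Proposition 3) almost verbatim, using the abstract properties of the $f$-belief operator collected in Proposition~\ref{prop1} rather than any special feature of the functions $f_i$. There are two implications to prove.

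\textbf{Sufficiency.} Suppose $D \in \Sigma$ satisfies $\omega \in D$, $D \subseteq B_i^{f_i}(C)$, and $D \subseteq B_i^{f_i}(D)$ for both $i$. I would show by induction on $n$ that $D \subseteq D^{n,f}(C)$ for every $n \ge 1$, where $D^{n,f}$ is the iterated operator defined just before the proposition. For $n=1$: by monotonicity (Proposition~\ref{prop1}(1)) and $D \subseteq B_i^{f_i}(D)$ together with $D \subseteq B_i^{f_i}(C)$, one gets $D = D \cap D \subseteq B_1^{f_1}(C) \cap B_2^{f_2}(C) = D^{1,f}(C)$ — actually more directly $D\subseteq B_i^{f_i}(C)$ for each $i$ gives $D\subseteq D^{1,f}(C)$. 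For the inductive step, assume $D \subseteq D^{n,f}(C)$; applying $B_i^{f_i}$ and monotonicity gives $B_i^{f_i}(D) \subseteq B_i^{f_i}(D^{n,f}(C))$, and since $D \subseteq B_i^{f_i}(D)$ for each $i$, we obtain $D \subseteq \bigcap_i B_i^{f_i}(D^{n,f}(C)) = D^{n+1,f}(C)$. Hence $D \subseteq \bigcap_{n\ge1} D^{n,f}(C) = D^f(C)$, so $\omega \in D^f(C)$, i.e.\ $C$ is common $f$-belief at $\omega$.

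\textbf{Necessity.} Conversely, suppose $C$ is common $f$-belief at $\omega$, i.e.\ $\omega \in D^f(C) = \bigcap_{n\ge1} D^{n,f}(C)$. The natural candidate is $D := D^f(C)$ itself. Condition (a), $\omega \in D$, is the hypothesis. For (b), note $D \subseteq D^{1,f}(C) = B_1^{f_1}(C)\cap B_2^{f_2}(C) \subseteq B_i^{f_i}(C)$. For (c) one must check $D^f(C) \subseteq B_i^{f_i}(D^f(C))$; this is where the continuity-type property is needed: the sequence $(D^{n,f}(C))_n$ is decreasing (an easy induction using monotonicity and $D^{1,f}(C)\subseteq C = D^{0,f}(C)$), so by Proposition~\ref{prop1}(3),
\[
B_i^{f_i}\Bigl(\bigcap_{n\ge0} D^{n,f}(C)\Bigr) = \bigcap_{n\ge0} B_i^{f_i}(D^{n,f}(C)) \supseteq \bigcap_{n\ge0} D^{n+1,f}(C) = D^f(C),
\]
since $D^{n+1,f}(C) = B_1^{f_1}(D^{n,f}(C)) \cap B_2^{f_2}(D^{n,f}(C)) \subseteq B_i^{f_i}(D^{n,f}(C))$. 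Here I should be mildly careful that $\bigcap_{n\ge 0}D^{n,f}(C) = \bigcap_{n\ge 1}D^{n,f}(C) = D^f(C)$, which holds because the sequence is decreasing and $D^{0,f}(C)=C \supseteq D^{1,f}(C)$. This gives $D^f(C) \subseteq B_i^{f_i}(D^f(C))$, establishing (c).

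\textbf{Main obstacle.} The only non-formal point is the necessity direction's use of Proposition~\ref{prop1}(3), the countable-intersection (downward continuity) property of the belief operator — everything else is monotonicity and bookkeeping. I would make sure to establish explicitly that $(D^{n,f}(C))_{n\ge0}$ is a decreasing sequence before invoking part~(3), since that is the hypothesis of that part; the decreasingness follows by induction from $D^{1,f}(C) \subseteq D^{0,f}(C) = C$ (which holds because $B_i^{f_i}(\cdot)$... in fact $D^{1,f}(C)\subseteq B_1^{f_1}(C)$, and $B_1^{f_1}(C)\subseteq C$ need not hold in general — so instead I note $D^{n+1,f}(C)\subseteq B_i^{f_i}(D^{n,f}(C))$ and, using the inductive hypothesis $D^{n,f}(C)\subseteq D^{n-1,f}(C)$ with monotonicity, $B_i^{f_i}(D^{n,f}(C))\subseteq B_i^{f_i}(D^{n-1,f}(C))$, whence $D^{n+1,f}(C)\subseteq D^{n,f}(C)$; the base case $D^{1,f}(C)\subseteq D^{0,f}(C)$ requires a separate remark or a reindexing so that the intersection starts at $n=1$, exactly as in Monderer and Samet). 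This is a routine but worth-stating technicality.
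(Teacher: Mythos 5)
Your route is the intended one: the paper gives no proof of Proposition \ref{prop2}, deferring to Proposition 3 of Monderer and Samet (1989), and your argument is precisely that adaptation, using only the abstract properties of $B_i^{f_i}$ from Proposition \ref{prop1}. The sufficiency direction is complete and correct. In the necessity direction, however, the loose end you flagged is a real one and is not closed: as written, your displayed chain starts the intersections at $n=0$ and justifies $\bigcap_{n\ge0}D^{n,f}(C)=D^f(C)$ via the inclusion $D^{1,f}(C)\subseteq D^{0,f}(C)=C$, which is false in general. Indeed $B_i^{f_i}(C)\subseteq C$ can fail both because $f_i$ may take nonpositive values (every state with $f_i(\omega)\le 0$ lies in $B_i^{f_i}(A)$ for \emph{every} $A$), and, even when $f_i>0$, because $C$ need not be $\Sigma_i$-measurable, so states outside $C$ may assign it high probability. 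Consequently $D^f(C)\subseteq C$ need not hold, and $D^{0,f}(C)$ must simply be left out of the intersection, consistent with the paper's definition $D^f(C)=\bigcap_{n\ge1}D^{n,f}(C)$.

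Moreover, after the reindexing you propose, the induction establishing that $(D^{n,f}(C))_{n\ge1}$ is decreasing still needs a base case, namely $D^{2,f}(C)\subseteq D^{1,f}(C)$, and your inductive step bottoms out at the same problematic inclusion $D^{1,f}(C)\subseteq C$. This is exactly where Proposition \ref{prop1}(2) (idempotence), which you never invoke, is needed: from $D^{1,f}(C)\subseteq B_i^{f_i}(C)$ and monotonicity, $B_i^{f_i}(D^{1,f}(C))\subseteq B_i^{f_i}(B_i^{f_i}(C))=B_i^{f_i}(C)$ for each $i$, hence $D^{2,f}(C)\subseteq B_1^{f_1}(C)\cap B_2^{f_2}(C)=D^{1,f}(C)$; monotonicity alone then propagates decreasingness to all $n\ge1$. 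With this in hand, Proposition \ref{prop1}(3) applies and gives $B_i^{f_i}(D^f(C))=\bigcap_{n\ge1}B_i^{f_i}(D^{n,f}(C))\supseteq\bigcap_{n\ge1}D^{n+1,f}(C)\supseteq D^f(C)$, which is condition (c) for the candidate $D:=D^f(C)$; conditions (a) and (b) are as you wrote. With this one-line repair your proof is correct and coincides with the Monderer--Samet argument the paper points to.
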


\subsection{Common $f$-Belief and Cooperation Events}
\label{fbeliefsec}

In the spirit of the notion of iterated $p$-belief (see Morris, 1999),
we define the concept of iterated $f$-belief of pairs of events.
As we will see below, this notion is also related to the concept of common $f$-belief.

\begin{definition}
For every two events $C_i\in\Sigma$, $i=1,2$, define
\begin{eqnarray*}
D_i^{1,f}(C_i,C_j)&:=&B_i^{f_i}(C_j)\cap C_i,\\
D_i^{n,f}(C_i,C_j)&:=&B_i^{f_i}(D_j^{n-1,f}(C_j,C_i))\cap D_i^{n-1,f}(C_i,C_j), \hbox{ for every } n > 1,\\
D_i^f(C_i,C_j)&:=&\bigcap_{n\ge 1}D_i^{n,f}(C_i,C_j).
\end{eqnarray*}
The event $D_i^f(C_1,C_2)$ is called the {\em iterated $f$-belief of player $i$ w.r.t. $(C_1,C_2)$}.
\end{definition}

To understand this definition,
suppose that $\omega$ is the true state of the world, and $C_1$ and $C_2$ are two events that contain
$\omega$.
Suppose that each player $i$ is informed of $C_i$.
The statement $\omega \in D_i^f(C_i,C_j)$ is then equivalent to the following:
(a) player $i$ $f_i$-believes that $\omega$ is in player $j$'s set,
(b) player $i$ $f_i$-believes that (b1) $\omega$ is in player $j$'s set and that
(b2) player $j$ $f_j$-believes that $\omega$ is in player $i$'s set,
etc. ad infinitum.

Because $B^f_i(A)$ is a $\Sigma_i$-measurable event for every event $A$,
if $C_i$ is a $\Sigma_i$-measurable event, then the event $D_i^f(C_1,C_2)$, the iterated $f$-belief of player $i$ w.r.t.~$(C_1,C_2)$,
is also $\Sigma_i$-measurable, for every (not necessarily $\Sigma_i$-measurable) event $C_j$.

As the next lemma states, $D_1^f(C_1,C_2)$ and $D_2^f(C_2,C_1)$
are the largest subsets of $C_1$ and $C_2$, respectively, such
that condition (b) in Theorem \ref{conditionprop} holds.

\begin{lemma}
\label{lemma1}
\begin{enumerate}
  \item For each $i\in\{1,2\}$ and every $\omega\in D_i^f(C_i,C_j)$ one has $P_i(D_j^f(C_j,C_i)\mid\omega)\ge f_i(\omega)$.
        $D_1^f(C_1,C_2)$ and $D_2^f(C_2,C_1)$ are the largest subsets of $C_1$ and $C_2$ (respectively) such that
        this property holds.\footnote{They are the largest in the following (strong) sense: if $K_1\subseteq C_1$ and
        $K_2\subseteq C_2$ satisfy $P_i(K_j\mid\omega)\ge f_i(\omega)$ for each $i=1,2$ and every $\omega\in K_i$, then
        $K_i\subseteq D_i^f(C_i,C_j)$.}
  \item If $C_i$ is $\Sigma_i$-measurable and $\{f_i\le0\}\subseteq C_i$ for $i=1,2$,
  then $D_i^{n,f}(C_i,C_j)$ and $D_i^f(C_i,C_j)$ depend only on the intersection of $C_i$ and $C_j$.
  In this case, $D_1^f(C_1,C_2)\cap D_2^f(C_2,C_1)=D^f(C_1\cap C_2)$,
  the event containing all states of the world $\omega$ such that $C_1\cap C_2$
  is a common $f$-belief at $\omega$, and $D_i^f(C_i,C_j)=B_i^{f_i}(D^f(C_1\cap C_2))$,
  the event containing all states of the world $\omega$ where
  player $i$ $f$-believes that $C_1\cap C_2$ is a common $f$-belief.
\end{enumerate}
\end{lemma}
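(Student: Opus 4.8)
The plan is to treat the two parts in order, using the properties of the $f$-belief operator collected in Proposition \ref{prop1}.

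For part 1, I would first observe that the sequences $\bigl(D_i^{n,f}(C_i,C_j)\bigr)_n$ are decreasing in $n$ (immediate from the defining recursion, since each term is an intersection of the previous term with something), so that $D_i^f(C_i,C_j) = \bigcap_n D_i^{n,f}(C_i,C_j)$ is a genuine decreasing limit. Then, to prove the stated inclusion $D_i^f(C_i,C_j)\subseteq B_i^{f_i}(D_j^f(C_j,C_i))$, I would use part 3 of Proposition \ref{prop1} (continuity of $B_i^{f_i}$ along decreasing sequences): $B_i^{f_i}(D_j^f(C_j,C_i)) = B_i^{f_i}\bigl(\bigcap_{n}D_j^{n,f}(C_j,C_i)\bigr) = \bigcap_{n}B_i^{f_i}(D_j^{n,f}(C_j,C_i)) \supseteq \bigcap_{n}D_i^{n+1,f}(C_i,C_j) = D_i^f(C_i,C_j)$, where the inclusion in the middle is from the recursion $D_i^{n+1,f}(C_i,C_j) = B_i^{f_i}(D_j^{n,f}(C_j,C_i))\cap D_i^{n,f}(C_i,C_j)\subseteq B_i^{f_i}(D_j^{n,f}(C_j,C_i))$. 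Translating back via the definition of $B_i^{f_i}$ gives $P_i(D_j^f(C_j,C_i)\mid\omega)\ge f_i(\omega)$ for $\omega\in D_i^f(C_i,C_j)$. For the maximality (the footnote's strong form), I would take $K_1\subseteq C_1$, $K_2\subseteq C_2$ with $P_i(K_j\mid\omega)\ge f_i(\omega)$ for all $\omega\in K_i$, i.e.\ $K_i\subseteq B_i^{f_i}(K_j)\cap C_i$, and show by induction on $n$ that $K_i\subseteq D_i^{n,f}(C_i,C_j)$: the base case $n=1$ is exactly $K_i\subseteq B_i^{f_i}(K_j)\cap C_i \subseteq B_i^{f_i}(C_j)\cap C_i = D_i^{1,f}(C_i,C_j)$ using monotonicity (part 1 of Proposition \ref{prop1}); and if $K_i\subseteq D_i^{n,f}(C_i,C_j)$ for both $i$, then $K_i\subseteq B_i^{f_i}(K_j)\cap D_i^{n,f}(C_i,C_j)\subseteq B_i^{f_i}(D_j^{n,f}(C_j,C_i))\cap D_i^{n,f}(C_i,C_j) = D_i^{n+1,f}(C_i,C_j)$. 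Taking the intersection over $n$ yields $K_i\subseteq D_i^f(C_i,C_j)$.

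For part 2, assume $C_i$ is $\Sigma_i$-measurable and $\{f_i\le0\}\subseteq C_i$ for $i=1,2$. The key tool is part 5 of Proposition \ref{prop1}, which in this situation gives $B_i^{f_i}(A)\cap C_i = B_i^{f_i}(A\cap C_i)$ for every $A$. I would prove by induction that $D_i^{n,f}(C_i,C_j)$ depends only on $C_1\cap C_2$; concretely, I expect to show $D_i^{n,f}(C_i,C_j) = D_i^{n,f}(C_1\cap C_2, C_1\cap C_2)$, pushing the $\cap C_i$ that sits at the outside of each recursion step inside the $B_i^{f_i}$ using part 5, and noting that $\Sigma_i$-measurability of $C_i$ together with $\{f_i\le 0\}\subseteq C_i$ is preserved for the relevant sets so that part 5 keeps applying. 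Once both iterated-belief events depend only on $C_1\cap C_2$, I would identify $D_1^f(C_1,C_2)\cap D_2^f(C_2,C_1)$ with $D^f(C_1\cap C_2)$ by comparing the recursions for $D_i^{n,f}$ with the recursion $D^{n+1,f}(C) = B_1^{f_1}(D^{n,f}(C))\cap B_2^{f_2}(D^{n,f}(C))$, or more slickly by invoking Proposition \ref{prop2}: the set $D:=D_1^f(C_1,C_2)\cap D_2^f(C_2,C_1)$ satisfies $D\subseteq B_i^{f_i}(C_1\cap C_2)$ (since $D\subseteq D_i^f(C_i,C_j)\subseteq B_i^{f_i}(D_j^f(C_j,C_i))\subseteq B_i^{f_i}(C_1\cap C_2)$ once part 5 has been used to absorb the $C_i$ factors) and $D\subseteq B_i^{f_i}(D)$ (from part 1 of this lemma plus part 5), which by Proposition \ref{prop2} means $C_1\cap C_2$ is common $f$-belief on $D$, i.e.\ $D\subseteq D^f(C_1\cap C_2)$; conversely $D^f(C_1\cap C_2)$ itself satisfies the defining maximality property in part 1 of this lemma with $C_i$ replaced by $C_1\cap C_2$, hence is contained in each $D_i^f$, giving the reverse inclusion. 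Finally $D_i^f(C_i,C_j) = B_i^{f_i}(D^f(C_1\cap C_2))$ follows by applying $B_i^{f_i}$ to the equality just established, again using continuity along decreasing sequences (part 3) and part 5 to match the recursions.

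I expect the main obstacle to be the bookkeeping in part 2: one must verify carefully, at each stage of the induction, that the sets to which part 5 of Proposition \ref{prop1} is applied really are $\Sigma_i$-measurable and really do contain $\{f_i\le 0\}$ — otherwise the crucial identity $B_i^{f_i}(A)\cap C_i = B_i^{f_i}(A\cap C_i)$ is not available. The measurability is fine because $B_i^{f_i}$ always outputs a $\Sigma_i$-measurable set and intersections of $\Sigma_i$-measurable sets are $\Sigma_i$-measurable; the containment $\{f_i\le 0\}\subseteq D_i^{n,f}(\cdot)$ needs a small separate induction, using part 4 of Proposition \ref{prop1} (which shows $\{f_i\le 0\}\subseteq B_i^{f_i}(C)$ for $\Sigma_i$-measurable $C$) at each step. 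Part 1, by contrast, is essentially a formal manipulation of the operator identities and should go through quickly.
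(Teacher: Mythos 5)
Your part 1 is exactly the paper's argument: the inclusion $D_i^f\subseteq B_i^{f_i}(D_j^f)$ via Proposition \ref{prop1}(3) applied to the decreasing sequence $(D_j^{n,f})_n$ together with $D_i^{n+1,f}\subseteq B_i^{f_i}(D_j^{n,f})$, and maximality by induction on $n$ (the paper phrases it as $D_i^{n,f}(K_i,K_j)=K_i$ plus monotonicity, you run the induction directly inside $D_i^{n,f}(C_i,C_j)$; these are equivalent). Part 2 is also essentially the paper's route: absorb the outer $\cap\,C_i$ into $B_i^{f_i}$ via Proposition \ref{prop1}(5), keep track of $\{f_i\le0\}\subseteq D_i^{n,f}$ by a side induction, match the two recursions to get $D^{n,f}(C_1\cap C_2)=D_1^{n,f}\cap D_2^{n,f}$, and obtain $D_i^f=B_i^{f_i}(D^f(C_1\cap C_2))$ from part 5 plus part 1; your alternative derivation of $D_1^f\cap D_2^f=D^f(C_1\cap C_2)$ through Proposition \ref{prop2} is a legitimate variant of the same idea.

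Two points need repair. First, the concrete induction target you propose, $D_i^{n,f}(C_i,C_j)=D_i^{n,f}(C_1\cap C_2,C_1\cap C_2)$, is false already at $n=1$: Proposition \ref{prop1}(5) gives $D_i^{1,f}(C_i,C_j)=B_i^{f_i}(C_j)\cap C_i=B_i^{f_i}(C_1\cap C_2)$, whereas $D_i^{1,f}(C_1\cap C_2,C_1\cap C_2)=B_i^{f_i}(C_1\cap C_2)\cap C_1\cap C_2$, and $B_i^{f_i}(C_1\cap C_2)\not\subseteq C_1\cap C_2$ in general (e.g.\ with $C_i=\Lambda_i$ in Example \ref{prisonerex1}, the state $(\frac34,\frac14)$ lies in $B_1^{f_1}(\Lambda)$ but not in $\Lambda$). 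The correct formulation of ``depends only on the intersection'' is the identity $D_i^{1,f}(C_i,C_j)=B_i^{f_i}(C_1\cap C_2)$ itself, after which the recursion no longer involves $C_1,C_2$ separately; this is what the paper uses. Second, in your Proposition \ref{prop2} route the reverse inclusion $D^f(C_1\cap C_2)\subseteq D_i^f(C_i,C_j)$ invokes the maximality of part 1, but that requires verifying $P_i(D^f(C_1\cap C_2)\mid\omega)\ge f_i(\omega)$ on $D^f(C_1\cap C_2)$ and $D^f(C_1\cap C_2)\subseteq C_i$, which you assert without argument; both are true under the part-2 hypotheses (the containment because $B_i^{f_i}(C_1\cap C_2)\subseteq B_i^{f_i}(C_i)\subseteq C_i$, the belief condition e.g.\ via the evident event furnished by Proposition \ref{prop2} or via the resulting monotonicity of $(D^{n,f}(C_1\cap C_2))_n$ and Proposition \ref{prop1}(3)), but the step should be spelled out. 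The paper's recursion-matching argument avoids this issue altogether.
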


Lemma \ref{lemma1}(2) relates the concepts of common $f$-belief and iterated $f$-belief:
when $f_i(\omega) > 0$ for every $\omega \not\in C_i$ and each $i=1,2$,
the event $D^f(C_1\cap C_2)$
is the intersection of the iterated $f$-belief events $D_1^f(C_1,C_2)$ and $D_2^f(C_1,C_2)$.
Example 3 in Morris (1999) shows that the concept of iterated belief that we defined here is different from
the concept with the same name defined in Morris (1999).

If $f_1>0$ and $f_2>0$,
then the set $\{f_i \leq 0\}$ is empty,
and part 2 holds as soon as $C_i$ is $\Sigma_i$-measurable.
The condition $f_1>0$ and $f_2>0$ holds in particular for the case of $p$-beliefs, when $f_1=f_2\equiv p$ for some $p \in (0,1]$.

\bigskip

\begin{proof}
To avoid cumbersome notation, we write $D_i^f$ and $D_i^{n,f}$ instead of $D_i^f(C_i,C_j)$ and $D_i^{n,f}(C_i,C_j)$, respectively.

We first argue that $D_i^f\subseteq B_i^{f_i}(D_j^f)$. Indeed,
by Proposition \ref{prop1}(3), and since $(D_j^{n,f})_{n=1}^\infty$ is a decreasing sequence of events,
        $$
        B_i^{f_i}(D_j^f)=B_i^{f_i}\left(\bigcap_{n\ge 1}D_j^{n,f}\right)=\bigcap_{n\ge 1}B_i^{f_i}(D_j^{n,f})\supset
        \bigcap_{n\ge 1} \left(B_i^{f_i}(D_j^{n,f})\cap D_i^{n,f}\right)=
        $$
        $$
        =\bigcap_{n\ge 1}D_i^{n+1,f}=D_i^f.
        $$
For the maximality property, assume that $K_1\subseteq C_1$ and
$K_2\subseteq C_2$ satisfy $P_i(K_j\mid\omega)\ge f_i(\omega)$ for every $\omega\in K_i$ and each $i=1,2$.
For each $i \in \{1,2\}$ we then have
$K_i\subseteq B_i^{f_i}(K_j)$,
which implies that $D_i^{1,f}(K_i,K_j)=K_i$.
It follows by induction on $n$ that $D_i^{n,f}(K_i,K_j)=K_i$,
and therefore $D_i^f(K_i,K_j)=K_i$.
Because $K_i\subseteq C_i$ it follows from Proposition \ref{prop1}(1)
that $D_i^f(K_i,K_j)\subseteq D_i^f(C_i,C_j)$, and the first claim follows.

Denote $C:=C_1\cap C_2$.
Because $\{f_i\le0\}\subseteq C_i$, by Proposition \ref{prop1}(5) one has $D_i^{1,f}(C_i,C_j)=B_i^{f_i}(C)$.
        This proves the first part of the second claim.

Because $D_i^{1,f}(C_i,C_j)=B_i^{f_i}(C)$, it can be verified
 from the definition that for $n\ge1$ one has
        $D^{n,f}(C)=D_1^{n,f}(C_1,C_2)\cap D_2^{n,f}(C_2,C_1)$, and therefore $D^f(C)=D_1^f(C_1,C_2)\cap D_2^f(C_2,C_1)$.
        The event ``player $i$ $f_i$-believes that $C_1\cap C_2$ is a common $f$-belief'' is the event
\begin{eqnarray}
B_i^{f_i}(D^f(C))&=&B_i^{f_i}(D_1^f(C_1,C_2)\cap D_2^f(C_2,C_1))\\
        &=&B_i^{f_i}(D_j(C_j,C_i))\cap D_i^f(C_i,C_j)=D_i^f(C_i,C_j).
\end{eqnarray}
Since $D_i^f(C_1,C_2)$ is $\Sigma_i$-measurable,
the second equality follows from Proposition \ref{prop1}(5),
        and the last one from the first part of this lemma.
\end{proof}

\bigskip

As a conclusion of Theorem \ref{conditionprop} and Lemma \ref{lemma1} we obtain the following result.

\begin{theorem}
\label{2actioncor}
If $C_i\subseteq\Lambda_i$ for $i=1,2$,
then $(B_1^{f_1}(D^f(C_1\cap
C_2)),B_2^{f_2}(D^f(C_1\cap C_2)))$ is a pair of cooperation events if and only if $P_i(B_j^{f_j}(D^f(C_1\cap
C_2))\mid\omega)\le f_i(\omega)$ for every $\omega\in
\Lambda_i\setminus B_i^{f_i}(D^f(C_1\cap C_2))$, for $i=1,2$.
In particular, $(B_1^{f_1}(D^f(\Lambda)),B_2^{f_2}(D^f(\Lambda)))$ is a pair of cooperation events, where $\Lambda=\Lambda_1\cap\Lambda_2$.
\end{theorem}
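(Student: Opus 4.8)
The plan is to apply Theorem~\ref{conditionprop} to the pair $(K_1,K_2)$, writing throughout $C:=C_1\cap C_2$, $E:=D^f(C)$ and $K_i:=B_i^{f_i}(E)$. Since $B_i^{f_i}$ always returns a $\Sigma_i$-measurable set, each $K_i$ is $\Sigma_i$-measurable, so Theorem~\ref{conditionprop} applies and the task splits into: (i) conditions (a) and (b) of that theorem hold automatically here; (ii) condition (c) is equivalent to the stated inequality; (iii) for the choice $C_i=\Lambda_i$ the stated inequality holds on its own. Two elementary facts about the specific $f_i$ will be used repeatedly: $f_i>0$ everywhere, so $\{f_i\le 0\}=\emptyset$; and $\{f_i>1\}=\{\lambda_i<\lambda_i^0\}=\Omega\setminus\Lambda_i$. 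Together with Proposition~\ref{prop1}(4) these give $B_i^{f_i}(S)=S\cap\Lambda_i$ for every $\Sigma_i$-measurable $S$, in particular $B_i^{f_i}(\Lambda_i)=\Lambda_i$.

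For (i): condition (a) holds because $E\subseteq D^{1,f}(C)\subseteq B_i^{f_i}(C)$ and $C\subseteq C_i\subseteq\Lambda_i$, so by Proposition~\ref{prop1}(1)--(2), $K_i=B_i^{f_i}(E)\subseteq B_i^{f_i}(B_i^{f_i}(C))=B_i^{f_i}(C)\subseteq B_i^{f_i}(\Lambda_i)=\Lambda_i$. For condition (b), i.e.\ $K_i\subseteq B_i^{f_i}(K_j)$, I would first check by induction that $(D^{n,f}(C))_{n\ge1}$ is a decreasing sequence of events, so that Proposition~\ref{prop1}(3) gives $K_i=\bigcap_{n\ge1}B_i^{f_i}(D^{n,f}(C))$; then, using $D^{n,f}(C)\subseteq B_j^{f_j}(D^{n-1,f}(C))$ together with monotonicity and two more applications of Proposition~\ref{prop1}(3), conclude $K_i\subseteq\bigcap_{m\ge1}B_i^{f_i}(B_j^{f_j}(D^{m,f}(C)))=B_i^{f_i}(B_j^{f_j}(E))=B_i^{f_i}(K_j)$. (If the $C_i$ are assumed $\Sigma_i$-measurable, (a) and (b) are immediate from Lemma~\ref{lemma1}.) For (ii): condition (c) asks $P_i(K_j\mid\omega)\le f_i(\omega)$ for all $\omega\notin K_i$; for $\omega\notin\Lambda_i$ this is automatic since $f_i(\omega)>1\ge P_i(K_j\mid\omega)$, so (c) reduces to the stated inequality on $\Lambda_i\setminus K_i$. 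Theorem~\ref{conditionprop} then yields the asserted equivalence.

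For (iii), take $C_i=\Lambda_i$ and aim to show the stated inequality, which here reads $B_i^{f_i}(K_j)\cap\Lambda_i\subseteq K_i$. Since $\Lambda_i\in\Sigma_i$ and $\{f_i\le0\}=\emptyset\subseteq\Lambda_i$, Lemma~\ref{lemma1}(2) gives $K_i=D_i^f(\Lambda_i,\Lambda_j)$, and Lemma~\ref{lemma1}(1) characterizes $(K_1,K_2)$ as the largest pair of subsets of $(\Lambda_1,\Lambda_2)$ for which condition (b) holds. I would then feed this maximality the enlarged set $K_i':=K_i\cup(B_i^{f_i}(K_j)\cap\Lambda_i)$, which is $\Sigma_i$-measurable and contained in $\Lambda_i$, and verify that $(K_i',K_j)$ still satisfies condition (b): every state in $K_i'\setminus K_i$ lies in $B_i^{f_i}(K_j)$, so $P_i(K_j\mid\omega)\ge f_i(\omega)$ there (and on $K_i$ this already holds by (b) for $(K_i,K_j)$), while $P_j(K_i'\mid\omega)\ge P_j(K_i\mid\omega)\ge f_j(\omega)$ for $\omega\in K_j$ since $K_i\subseteq K_i'$. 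Maximality then forces $K_i'\subseteq K_i$, i.e.\ $B_i^{f_i}(K_j)\cap\Lambda_i\subseteq K_i$, completing the argument. I expect this to be the delicate step: condition (c) is not automatically inherited by $K_i=B_i^{f_i}(D^f(C_1\cap C_2))$ for general $C_i$, and the argument closes for $C_i=\Lambda_i$ precisely because the candidate enlargement $B_i^{f_i}(K_j)\cap\Lambda_i$ is then an admissible competitor — $\Sigma_i$-measurable and inside $\Lambda_i$ — in the maximality clause of Lemma~\ref{lemma1}(1).
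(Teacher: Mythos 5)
Your proposal is correct and follows essentially the same route as the paper: reduce to conditions (a)--(c) of Theorem~\ref{conditionprop}, note that (c) is automatic off $\Lambda_i$ because $f_i>1$ there, and for $C_i=\Lambda_i$ invoke the maximality clause of Lemma~\ref{lemma1}(1) against an enlarged candidate set. The only (harmless) differences are that you verify (a) and (b) by direct belief-operator computations rather than citing Lemma~\ref{lemma1} (which is actually slightly more careful, since the theorem does not assume the $C_i$ are $\Sigma_i$-measurable), and you enlarge $K_i$ by the whole set $B_i^{f_i}(K_j)\cap\Lambda_i$ instead of the paper's single violating state $\omega^*$.
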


\begin{proof}
The first claim follows from Theorem \ref{conditionprop} and Lemma \ref{lemma1},
  because $f_i(\omega)>1$ for every $\omega\not\in\Lambda_i$.

To prove the second claim, assume to the contrary that there exists $\omega^*\in \Lambda_1\setminus B_1^{f_1}(D^f(\Lambda))$
  such that $P_1(B_2^{f_2}(D^f(\Lambda)\mid\omega^*))> f_1(\omega^*)$.
  Denote $K_1:=B_1^{f_1}(D^f(\Lambda)\cup\{\omega^*\})$ and $K_2:=B_2^{f_2}(D^f(\Lambda))$.
  From the assumption, $P_i(K_j\mid\omega)\ge f_i(\omega)$ for every $\omega\in K_i$ and each $i\in\{1,2\}$,
contradicting Lemma \ref{lemma1}(1).
\end{proof}

\bigskip

The last case defines the largest cooperation events: the profile
in which each player $i$ plays the grim trigger course of action
whenever he $f_i$-believes that it is a common $f$-belief that
$\lambda_1,\lambda_2\ge\frac{1}{3}$ (and otherwise, always
defects) is an equilibrium.

Note that although there cannot be a pair of cooperation events larger than
$(B_1^f(D^f(\Lambda)),B_2^f(D^f(\Lambda)))$, there may be smaller pairs of non-trivial cooperation events (see Example \ref{prisonerex1} below).

\begin{remark}
\label{ICR}
If we consider the solution concept of interim correlated rationalizability (ICR) instead of Bayesian equilibrium, we obtain slightly different results.
First, the course of action $D^*_i$ is always rationalizable for player $i$, regardless of the information structure,
which makes condition (c) in Theorem \ref{conditionprop} redundant.
Second, using similar arguments as in the proofs of Theorems \ref{conditionprop} and \ref{2actioncor}, it can be shown that $GT_i^*$ is rationalizable for player $i$ if $\omega\in B_i^{f_i}(D^f(C_1\cap C_2))$ for some $C_i\subset \Lambda_i$.
Therefore, the strategy $\eta^*_i(B_i^{f_i}(D^f(C_1\cap C_2)))$ is rationalizable for \emph{every} choice of $C_i\subset \Lambda_i$.
This difference is more significant when considering larger games (see Section \ref{g2pg}).
\end{remark}

\section{Examples}
\label{section examples}

In this section we present several examples that illustrate properties of cooperation events.
Constructing examples where the only pair of cooperation events is the trivial pair $(\emptyset,\emptyset)$ is not difficult.
For example, if there is a player $i$ who, whenever his discount factor is higher than $\lambda_i^0 = \frac{1}{3}$,
assigns high probability to the event that player $j$'s discount factor is low, then that player will never cooperate,
and there will be no non-trivial pair of cooperation events.
The same phenomenon will occur if such a property holds in higher levels of belief,
e.g.,
there is a player $i$ who believes that player $j$'s belief satisfies that,
whenever player $j$'s discount factor is higher than $\lambda_j^0$,
player $j$ assigns high probability to the event that player $i$'s discount factor is low.

We start with two examples in which $B_i^{f_i}(D^f(\Lambda)) = \Lambda_i$ for $i \in \{1,2\}$,
so that $(\Lambda_1,\Lambda_2)$ is a pair of cooperation events;
in the first example, there are additional pairs of non-trivial cooperation events,
while in the second example $(\Lambda_1,\Lambda_2)$ is the unique pair of non-trivial cooperation events.

In all the examples in this section, $\Omega\subseteq[0,1)^2$, and
we interpret the coordinates of $\omega\in\Omega$ as the players'
discount factors, i.e.
$\lambda(\omega)=(\lambda_1(\omega),\lambda_2(\omega))=\omega$. In
all examples, each player's belief
regarding the other player's discount factor depends only on his
own discount factor.

\begin{example}
\label{prisonerex1}
Let $\Omega=\{{1\over4},{1\over2},{3\over4}\}^2$.
The belief of the players is derived from a common prior, which is the uniform distribution over $\Omega$.

We first show that $D_i^{f}(\Lambda_1,\Lambda_2)=\Lambda_i$ for each $i \in \{1,2\}$,
so that by Theorem \ref{2actioncor}, $(\{{1\over2},{3\over4}\},\{{1\over2},{3\over4}\})$ is a pair of cooperation events.
We then argue that $(\{{3\over4}\},\{{3\over4}\})$ is also a pair of cooperation events.%
\footnote{To avoid cumbersome notation, in the examples we write
$\{{3\over4}\}$ instead of $\{{3\over4}\} \times \{{1\over4},{1\over2},{3\over4}\}$ or $\{{1\over4},{1\over2},{3\over4}\} \times \{{3\over4}\}$,
and so on, when the meaning is clear. We also write
$f_1(\frac{3}{4})$ and $f_2(\frac{3}{4})$ instead of $f_1((\frac{3}{4},y))$ and $f_2((y,\frac{3}{4}))$
for $y \in \Omega$.}
Note that
\[ f_i\left(\frac{3}{4}\right) = \frac{1}{6}, \ \ \ f_i\left(\frac{1}{2}\right) = \frac{1}{2}, \ \ \ f_i\left(\frac{1}{4}\right) = \frac{3}{2}. \]

Because $\lambda_1^0=\lambda_2^0={1\over3}$, it follows that
$\Lambda_1=\{{1\over2},{3\over4}\}\times\{{1\over4},{1\over2},{3\over4}\}$ and
$\Lambda_2=\{{1\over4},{1\over2},{3\over4}\}\times\{{1\over2},{3\over4}\}$.
Therefore, $P_1(\Lambda_2\mid\omega)=\frac{2}{3}$ for every $\omega$. Since
$f_1({3\over4})<\frac{2}{3}$ and $f_1({1\over2})<\frac{2}{3}$ we get
$D_1^{1,f}(\Lambda_1,\Lambda_2)=\Lambda_1$. Similarly,
$D_2^{1,f}(\Lambda_2,\Lambda_1)=\Lambda_2$.
It follows that
$B_1^{f_1}(D^f(\Lambda))=D_1^f(\Lambda_1,\Lambda_2)=\Lambda_1$, and therefore
for each $i \in \{1,2\}$, the iterated $f$-belief of player $i$ w.r.t. $(\Lambda_1,\Lambda_2)$ is $\Lambda_i$.

To see that $(\{{3\over4}\},\{{3\over4}\})$ is a pair of cooperation events,
note that
\[ P_1\left(\left\{{1\over4},{1\over2},{3\over4}\right\}\times \left\{\frac{3}{4}\right\} \mid \omega\right) =
P_2\left(\left\{{3\over4}\right\}\times \left\{{1\over4},{1\over2},\frac{3}{4}\right\} \mid \omega\right) = \frac{1}{3}, \]
for every state of the world  $\omega \in \Omega$ and each player $i \in \{1,2\}$,
and use Theorem \ref{conditionprop}.
In fact, the same argument also implies that $(\{{1\over2},{3\over4}\},\{{1\over2},{3\over4}\})$ is a pair of cooperation events,
because $P_i(\{{1\over2},{3\over4}\} \mid\omega) = \frac{2}{3}$ for every state of the world $\omega \in \Omega$ and each player $i \in \{1,2\}$.
\end{example}

\begin{example}
\label{prisonerex2}

Let $\Omega$ be as in the Example \ref{prisonerex1}.
The beliefs of the players are given in Figure 2.

\[ \begin{array}{c||c|c|c|}
 & \lambda_j={1\over4} & \lambda_j={1\over2} & \lambda_j={3\over4}\\
\hline\hline
\lambda_i={1\over4} & 1\over3 & 1\over3  & 1\over3\\
\hline
\lambda_i={1\over2} & 0 & 1\over3 & 2\over3\\
\hline
\lambda_i={3\over4} & 0  & 0 & 1 \\
\hline
\end{array}
\]
\centerline{Figure 2: The beliefs of player $i$ given his discount factor in Example \ref{prisonerex2}.}
\bigskip

Each player believes that the other
player's discount factor is at least as high as his own.
One can verify that $B_i^{f_i}(D^f(\Lambda))=\Lambda_i$ for $i \in \{1,2\}$,
so that $(\{{1\over2},{3\over4}\},\{{1\over2},{3\over4}\})$ is a pair of cooperation events.

Let $C_i$ be as in Example \ref{prisonerex1}.
Because
\[ P_1\left(\left\{{1\over4},{1\over2},{3\over4}\right\}\times\left\{\frac{3}{4}\right\} \mid {1\over2}\right) =
P_2\left(\left\{{3\over4}\right\}\times\left\{{1\over4},{1\over2},\frac{3}{4}\right\} \mid {1\over2}\right)=
\frac{2}{3} > \frac{1}{2}, \]
it follows from Theorem \ref{conditionprop}
that $(\{{3\over4}\},\{{3\over4}\})$
is not a pair of cooperation events.
One can verify that in this example, the only non-trivial pair of cooperation events is $(\{{1\over2},{3\over4}\},\{{1\over2},{3\over4}\})$.
\end{example}

In the following example $B_i^{f_i}(D^f(\Lambda))$ is a strict non-empty subset of
$\Lambda_i$ for $i \in \{1,2\}$.

\begin{example}
\label{prisonerex3}
Let $\Omega$ be as in Example \ref{prisonerex1}.
The beliefs of the players are given in Figure 3:
each player believes that his discount factor is high if and only if the other player's discount factor is low.

\[ \begin{array}{c||c|c|c|}
\lambda_i & {3\over4} & {1\over2} & {1\over4}\\
\hline\hline
{3\over4} & 0 & 0 & 1\\
\hline
{1\over2} & 0 & 1 & 0\\
\hline
{1\over4} & 1 & 0 & 0 \\
\hline
\end{array}
\]
\centerline{Figure 3: The beliefs of player $i$ given his discount factor in Example \ref{prisonerex3}.}
\bigskip

One can verify that $B_i^{f_i}(D^f(\Lambda)) = \{{1\over2}\}$,
and that $(\{{1\over2}\},\{{1\over2}\})$ is the unique non-trivial pair of cooperation events.
\end{example}

The next example shows that even if there are no cooperation events,
there may be equilibria in which, with positive probability, the players eventually cooperate.
Such a cooperation is achieved by having players signal their information one to the other in the first stage of the game.

\begin{example}
\label{example new}

Let $\Omega=\{({1\over2},{1\over2}),({1\over2},{1\over4})\}$;
that is, player 1 has a high discount factor,
while the discount factor of player 2 may be high or low.
With probability $p < {1\over2} = f_1({1\over2})$ the state of the world is $({1\over2},{1\over2})$.

We first argue that $B_i^{f_i}(D^f(\Lambda))=\emptyset$ for $i \in \{1,2\}$,
and hence there is no non-trivial pair of cooperation events.
Indeed, $\Lambda_1=\Omega$ and $\Lambda_2=\{({1\over2},{1\over2})\}$,
and therefore $\Lambda := \Lambda_1\cap\Lambda_2 =\{({1\over2},{1\over2})\}$.
Now, $D_1^{1,f}=B_1^f(\Lambda)=B_1^f(\{({1\over2},{1\over2})\})$, and since $p<f_1({1\over2})<f_1({1\over4})$,
we have $D_1^{1,f}(\Lambda)=\emptyset$. Therefore, $B_1^f(D^f(\Lambda))=\emptyset$, and thus $B_2^f(D^f(\Lambda))=\emptyset$.

We claim that the following strategy profile $\sigma=(\sigma_1,\sigma_2)$ is a Bayesian equilibrium.
\begin{itemize}
\item[Player 1:]
Play $D$ in the first stage.
From the second stage on play $D^*_1$ if player 2 played $D$ in the first stage,
and play $GT^*_1$ if player 2 played $C$ in the first stage.
\item[Player 2:]
If the state of the world is $({1\over2},{1\over4})$, play the course of action $D^*_2$.
If the state of the world is $({1\over2},{1\over2})$, play $C$ in the first stage;
from the second stage on play the course of action $GT^*_2$,
starting with action $C$ at the second stage, without taking into account player 1's action at the first stage.
\end{itemize}
Note that under this strategy profile,
the first stage is used by player 2 to signal the state of the world to player 1,
and from the second stage on the players either defect in all stages or cooperate in all stages.

We now verify that this strategy pair is a Bayesian equilibrium.
Because from the second stage on the players follow an equilibrium,
any profitable deviation involves deviating in the first stage.

Player 1 cannot profit by deviating in the first stage, because in the first stage he plays a dominant strategy in the one-shot game,
and his action at that stage does not affect the evolution of the play.

We now argue that player 2 cannot profit by deviating in the first stage either.
Assume first that the state of the world is $({1\over2},{1\over2})$.
Player 2's payoff under $\sigma$ is then ${1\over2}\times\frac{3}{1-{1\over2}} = 3$,
while if he deviates in the first stage and plays $D$
his payoff would be $\frac{1}{1-{1\over2}} = 2$.
Assume now that the state of the world is $({1\over2},{1\over4})$.
Player 2's payoff under $\sigma$ is then $\frac{1}{1-{1\over2}} = 2$,
while if he deviates in the first stage and plays $C$ he will be able to gain 4 in the second stage,
so that his payoff is bounded by ${1\over4}\times 4 + ({1\over4})^2\times\frac{1}{1-{1\over4}} ={13\over12}< 2$.
\end{example}

Our last example concerns the continuity of
cooperation events as the players' beliefs vary. It shows that
even if each player knows approximately the other player's discount factor, there need not be cooperation events that are close to
$(\Lambda_1,\Lambda_2)$. We will discuss this issue further in
Section \ref{section almost}.

\begin{example}
\label{prisonerex4} Let $\Omega=(0,1)^2$ be equipped with the
Borel $\sigma$-algebra. Each player believes that the other
player's discount factor is within $\epsilon>0$ of his own:
player $i$ believes that $\lambda_j$
is uniformly distributed in the interval $(\lambda_i(\omega)-\epsilon,\lambda_i(\omega)+\epsilon) \cap (0,1)$.
One has $\Lambda_1=[{1\over3},1)\times(0,1)$
and $\Lambda_2=(0,1)\times[{1\over3},1)$. We will show that
$B_1^{f_1}(D^f(\Lambda))=[{1\over2},1)\times(0,1)$, provided $\epsilon$ is sufficiently small,\
which will show that $\Lambda_i$ and $B_i^{f_i}(D^f(\Lambda))$ are not close in the Hausdorff metric.

Let $\delta>0$. Because $f_i$ is a continuous
monotonically decreasing function, and because $f_i({1\over2})={1\over2}$,
there exists $\delta'>0$ such that $f_i({1\over2}-\delta')={1\over2}+\delta$.
Therefore, any state of the world $\omega$ such that
$\lambda_i(\omega)<\min\{{1\over3}+2\epsilon\delta,{1\over2}-\delta'\}$ is not
in $D_i^{1,f}(\Lambda_i,\Lambda_j)$.
Indeed, for such
states of the world $\omega$, one has
$P_i(\Lambda_j\mid\omega)=P_i(\{\lambda_j\ge{1\over3}\}\mid\omega)<{1\over2}+\delta=f_i({1\over2}-\delta')<f_i(\omega)$.
Therefore, $D_1^{1,f}(\Lambda_1,\Lambda_2)=[x^1,1)\times(0,1)$ for some
$x^1\ge\min\{{1\over3}+2\epsilon\delta,{1\over2}-\delta'\}$; a similar result holds for player 2.
By the same reasoning, any state of the world $\omega$ such that
$\lambda_i(\omega)<\min\{x^1+2\epsilon\delta,{1\over2}-\delta'\}$ is not
in $D_i^{2,f}(\Lambda_i,\Lambda_j)$, and
$D_1^{2,f}(\Lambda_1,\Lambda_2)=[x^2,1)\times(0,1)$ for
$x^2\ge\min\{x^1+2\epsilon\delta,{1\over2}-\delta'\}$. We continue inductively, and deduce that any state of the world $\omega$ such that
$\lambda_i(\omega)<{1\over2}-\delta'$ is not in
$D_i^f(\Lambda_i,\Lambda_j)$. This is true for every $\delta>0$.
Because $\delta'$ goes to $0$ as $\delta$ goes to $0$,
 $D_1^f(\Lambda_1,\Lambda_2)\subseteq[{1\over2},1)\times(0,1)$.
An analog inequality holds for player 2.

The inclusion $D_1^f(\Lambda_1,\Lambda_2)\supseteq[{1\over2},1)\times(0,1)$ follows from Lemma \ref{lemma1}(1),
because $[{1\over2},1)\times(0,1)$ and $(0,1)\times[{1\over2},1)$ satisfy
inequality (b) of Theorem \ref{conditionprop}.

From Theorem \ref{2actioncor} we deduce that $([{1\over2},1),[{1\over2},1))$ is a pair of cooperation events.
An alternative way to show the last point is to use Theorem \ref{conditionprop}.

Note that this analysis does not change if
$\Omega=(0,1)^2\cap\{(x,y):|x-y|<\epsilon\}$, which verifies that
the true state of the world is within the support of the beliefs
of the players. Also, if, for every state of the world $\omega$,
player $i$ believes that $\lambda_j$ is distributed in any
non-atomic symmetric way around $\lambda_i(\omega)$, the result
still holds. Similar examples can also be constructed with a finite state space.
\end{example}

\section{Generalizations and Additional Results}
\label{section generalizations}

\subsection{General Two-Player Repeated Games}
\label{g2pg}

Our main results were given for the repeated Prisoner's Dilemma.
In this section we provide an analog result for general repeated
games with incomplete information on the discount factors. We will start by
explaining how to adapt Theorem \ref{conditionprop} to this setup.
Conditions (b) and (c) of that theorem will change, because in
general games there are more ways in which a player can deviate
than in the Prisoner's Dilemma. As we will see below, in the
general case, even if the event $D^f(\Lambda_1\cap \Lambda_2)$ is
not empty, there might not be non-trivial cooperation events.
This happens because in games larger than $2 \times 2$,
conditions (b) and (c) do not refer to the same function $f$, but
rather to different functions $f$ and $g$, respectively, and
therefore $D^f(\Lambda_1\cap \Lambda_2)$ does not automatically
fulfill condition (c).
However, in the spirit of Remark \ref{ICR}, if we consider ICR as a solution concept,
the function $g$ becomes irrelevant and $\eta^*_i(B_i^{f_i}(D^f(C_1\cap C_2)))$ are still rationalizable strategies for every choice of $C_i\subset \Lambda_i$,
and in particular for $C_i=\Lambda_i$. Therefore,
unlike the case of Bayesian equilibrium, when considering ICR the results remain essentially the same in larger games.

Consider a two-player repeated game $G$ where the set of actions of each player $i \in \{1,2\}$ is a finite set $A_i$,
and his utility function is $u_i : A_1 \times A_2 \to \dR$.
Each $u_i$ is extended multilinearily to mixed actions.
As in the model of Section \ref{section model},
each player has incomplete information about the other player's discount factor.
Let $\sigma = (\sigma_1,\sigma_2)$ be a Nash equilibrium in mixed actions of
the one-shot game $\Gamma := (\{1,2\}, A_1,A_2,u_1,u_2)$, and let $\tau = (\tau_1,\tau_2)$ be
a pair of pure actions that satisfies the following two conditions, for each $i \in \{1,2\}$:
\begin{itemize}
\item   $u_i(\tau_1,\tau_2) > u_i(\sigma_1,\sigma_2)$, and
\item   $\tau_i$ is not a best response against $\sigma_j$.
\end{itemize}
The actions $\sigma_1$ and $\sigma_2$ are the analog of the action $D$ in the Prisoner's Dilemma,
while the actions $\tau_1$ and $\tau_2$ are the analog of the action $C$.

Denote by $\sigma^*_i$ the course of action of player $i$ in which he always plays $\sigma_i$.
Denote by $GT^*_i$ the course of action of player $i$ in which he plays $\tau_i$ in the first stage,
and in every subsequent stage he plays $\tau_i$ if player $j$ played $\tau_j$ in all previous stages,
and plays $\sigma_i$ otherwise.

The definition of a conditional grim trigger strategy is analogous to Definition \ref{grimdef},
with the course of action $\sigma_i^*$ replacing the ``always defect'' course of action $D_i^*$:

\begin{definition}
Let $K_i \subseteq \Omega$ be a $\Sigma_i$-measurable event.
\emph{A conditional grim trigger strategy} for player $i$ with cooperation region $K_i$,
is the strategy $\eta_i^*(K_i)$ defined as follows:
\begin{equation}
\eta^*(K_i \mid \omega) = \left\{
\begin{array}{lll}
GT^*_i(\omega) & & \omega \in K_i,\\
\sigma^*_i(\omega) & \ \ \ \ \ & \omega\not\in K_i.
\end{array}
\right.
\end{equation}
\end{definition}

Cooperation events are defined similarly to Definition \ref{defin cooperation}.
In the complete information case,
there are thresholds
$\lambda_1^0,\lambda_2^0$ such that the players can agree to cooperate (and play according to $\tau$)
if and only if $\lambda_i(\omega)\ge \lambda_i^0$ for every $\omega \in \Omega$, where
$$\lambda_i^0:=\min\left\{\lambda_i\mid
\frac{u_i(\tau)}{1-\lambda_i}-\left(u_i(\sigma_i',\tau_j)+u_i(\sigma)\frac{\lambda_i}{1-\lambda_i}\right)\ge0
\;\;\forall \sigma_i'\ne\tau_i\right\}.$$ Denote
$\Lambda_i=\{\omega\in\Omega\mid\lambda_i(\omega)\ge\lambda_i^0\}$.
Then in the complete information case, Theorem \ref{theorem complete} holds:
$(K_1,K_2)$ are cooperation events if and only if $K_1=K_2\subseteq \Lambda$,
where $\Lambda:=\Lambda_1\cap\Lambda_2$.

The analog of Theorem \ref{conditionprop} in the general setup is the following.
\begin{theorem}
\label{generalconditionprop} Let $K_i\subseteq\Omega$ be a $\Sigma_i$-measurable event for each $i=1,2$.
The pair $(K_1,K_2)$ is a pair of cooperation events if and only if, for each $i=1,2$,
\begin{enumerate}
\item[(a)] $K_i\subseteq\Lambda_i$,
\item[(b)] $P_i(K_j\mid\omega)\ge f_i(\omega)$ for every $\omega\in K_i$, and
\item[(c)] $P_i(K_j\mid\omega)\le g_i(\omega)$ for every $\omega\notin K_i$,
\end{enumerate}
for the $\Sigma_i$-measurable functions%
\footnote{By convention, the infimum over an empty set is 1 and the supremum over an empty set is 0.}
\begin{equation}
\label{equ fi}
\begin{array}{ll}
      f_i(\omega):=
      \max_{\sigma_i'\in F_i}
      \frac{u_i(\sigma_i',\sigma_j)-u_i(\tau_i,\sigma_j)}{\left(\frac{u_i(\tau)}{1-\lambda_i(\omega)}-(u_i(\sigma_i',\tau_j)
      +u_i(\sigma)\frac{\lambda_i(\omega)}{1-\lambda_i(\omega)})\right)+(u_i(\sigma_i',\sigma_j)-u_i(\tau_i,\sigma_j))},
\end{array}
\end{equation}
and
$$
g_i(\omega):=\min\{g_i^1,g_i^2(\omega),g_i^3(\omega)\},
$$
where $\sigma_i'$ is an action of player $i$, and
$F_i=\{\sigma_i' \in A_i\mid u_i(\tau_i,\sigma_j)<u_i(\sigma_i',\sigma_j)\}$.
The functions $g_i^1$, $g_i^2(\omega)$ and $g_i^3(\omega)$ reflect several types of deviations, and are defined by
$$g_i^1:=\min_{\sigma_i'\in H_i^1}
        \frac{u_i(\sigma)-u_i(\sigma_i',\sigma_j)}
        {(u_i(\sigma)-u_i(\sigma_i',\sigma_j))+(u_i(\sigma_i',\tau_j)-u_i(\sigma_i,\tau_j))}$$
where $H_i^1:=\left\{\sigma_i'\in A_i \setminus\{\sigma_i,\tau_i\}\mid u_i(\sigma_i,\tau_j)<u_i(\sigma_i',\tau_j)\right\}$,
$$
        g_i^2(\omega):=
        \frac{u_i(\sigma)-u_i(\tau_i,\sigma_j)}{(u_i(\sigma)-u_i(\tau_i,\sigma_j))+
        \left(\frac{u_i(\tau)}{1-\lambda_i(\omega)}-(u_i(\sigma_i,\tau_j)
        +u_i(\sigma)\frac{\lambda_i(\omega)}{1-\lambda_i(\omega)})\right)}
$$
whenever $u_i(\sigma_i,\tau_j)+u_i(\sigma)\left(\frac{\lambda_i(\omega)}{1-\lambda_i(\omega)}\right)
<\left(\frac{u_i(\tau)}{1-\lambda_i(\omega)}\right)$, and $g_i^2(\omega):=1$ otherwise, and
$$
\begin{array}{l}
        g_i^3(\omega):=
        \min_{\sigma_i'\in H_i^3(\omega)}
        \frac{u_i(\sigma)-u_i(\tau_i,\sigma_j)}
        {(u_i(\sigma)-u_i(\tau_i,\sigma_j))+(u_i(\tau)-u_i(\sigma_i,\tau_j)+
        (u_i(\sigma_i',\tau_j)-u_i(\sigma))(\lambda_i(\omega))},
\end{array}
        $$
where
$$
        H_i^3(\omega):=\left\{\sigma_i'\in A_i \setminus \{\tau_i\}\mid
        u_i(\tau)-u_i(\sigma_i,\tau_j)+(u_i(\sigma_i',\tau_j)-u_i(\sigma))\lambda_i(\omega)<0\right\}.
$$
\end{theorem}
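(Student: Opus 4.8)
The plan is to follow the template of the proof of Theorem~\ref{conditionprop} (in the appendix), but to track all the ways in which a player can profitably deviate from a conditional grim trigger strategy in the general game $G$, rather than just the single ``defect at the first stage'' deviation available in the Prisoner's Dilemma. Fix the strategy profile $\eta^*(K_1,K_2)=(\eta_1^*(K_1),\eta_2^*(K_2))$ and consider a potential deviation by player $i$ at a state of the world $\omega$, against the (conditional) grim trigger strategy of player $j$. Since on the complement of $K_j$ player $j$ plays $\sigma_j^*$ forever and $\sigma$ is a Nash equilibrium of $\Gamma$, and since after any first-stage play the continuation from stage $2$ on is again of grim-trigger-versus-$\sigma$ form, a standard one-shot-deviation argument reduces the problem to comparing payoffs over the \emph{first stage} only: any profitable deviation can be taken to differ from $\eta_i^*(K_i)$ only in the first-stage action.

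Next I would split into the two cases $\omega\in K_i$ and $\omega\notin K_i$. \textbf{Case $\omega\in K_i$:} here player $i$ is supposed to play $\tau_i$ in the first stage and then run $GT_i^*$. A deviation to some $\sigma_i'\ne\tau_i$ gains $u_i(\sigma_i',\sigma_j)-u_i(\tau_i,\sigma_j)$ at the first stage on the event $K_j^c$ (where player $j$ plays $\sigma_j$), gains nothing relevant on $K_j$ in the first stage up to the continuation, but then triggers punishment on $K_j$: on that event the deviator forfeits the cooperative stream $u_i(\tau)/(1-\lambda_i)$ and gets instead $u_i(\sigma_i',\tau_j)$ this stage plus $u_i(\sigma)\lambda_i/(1-\lambda_i)$ thereafter. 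Writing $P:=P_i(K_j\mid\omega)$, the no-deviation inequality for a given $\sigma_i'$ rearranges (for $\sigma_i'\in F_i$, i.e.\ those $\sigma_i'$ for which the first-stage deviation gain is positive) to exactly $P\ge f_i(\omega)$ with $f_i$ as in \eqref{equ fi}; for $\sigma_i'\notin F_i$ the deviation is never profitable, so the binding constraint is the max over $F_i$. This yields condition~(b). \textbf{Case $\omega\notin K_i$:} here player $i$ is supposed to play $\sigma_i$ forever, and I must check that no deviation pays. A deviation to $\tau_i$ at the first stage (to try to induce cooperation with the $K_j$-types of player $j$), or to some other $\sigma_i'$, can be profitable only when player $j$ is of a cooperative type; quantifying each family of such deviations gives three constraints $P\le g_i^1$, $P\le g_i^2(\omega)$, $P\le g_i^3(\omega)$: $g_i^1$ comes from deviations $\sigma_i'\in A_i\setminus\{\sigma_i,\tau_i\}$ that do better than $\sigma_i$ against $\tau_j$ in the first stage while not triggering a cooperative continuation; $g_i^2(\omega)$ from the deviation to $\tau_i$ at stage $1$ that \emph{does} trigger the cooperative stream from stage $2$ on; and $g_i^3(\omega)$ from deviations that play $\tau_i$ at stage $1$ and then some $\sigma_i'$ at stage $2$ before reverting. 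Taking the minimum gives $P\le g_i(\omega)$, i.e.\ condition~(c). Finally, condition~(a) is forced exactly as in Theorem~\ref{theorem complete}: when $\lambda_i(\omega)<\lambda_i^0$ the cooperative stream does not even beat the most pessimistic deviation against $\tau_j$, so $K_i\subseteq\Lambda_i$ is necessary; and conversely the convention that an infimum over the empty set is $1$ makes the $f_i,g_i$ constraints vacuous in the right places.

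For the converse direction I would simply read the same inequalities backwards: if (a)--(c) hold, then for every $\omega$ and every first-stage deviation the relevant payoff comparison is non-positive, and since the continuation profile is itself an equilibrium (built from the Nash equilibrium $\sigma$ of $\Gamma$ and grim trigger), the one-shot-deviation principle shows $\eta^*(K_1,K_2)$ is a Bayesian equilibrium; measurability of $f_i,g_i$ and of the $K_i$ makes the $\E_i(\cdot\mid\omega)$ computations legitimate. The main obstacle is the bookkeeping in Case $\omega\notin K_i$: one must argue that the three deviation types captured by $g_i^1,g_i^2,g_i^3$ are exhaustive — i.e.\ that any profitable deviation from $\sigma_i^*$ against a mixture of $\sigma_j^*$ and $GT_j^*$ reduces, via the one-shot-deviation principle applied stagewise, to one of these three shapes (first-stage departure that does not unlock cooperation; first-stage $\tau_i$ that unlocks the full cooperative stream; first-stage $\tau_i$ followed by one more departure). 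Verifying exhaustiveness, and checking that the worst case over $H_i^1$, $H_i^3(\omega)$ and the single $g_i^2$ term really is the minimum one needs, is where the care lies; the algebra turning each individual no-deviation inequality into the displayed closed form for $f_i$ and the $g_i^k$ is routine discounting arithmetic and I would not belabor it.
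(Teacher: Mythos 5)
Your route is the one the paper itself points to: the paper omits the proof of Theorem \ref{generalconditionprop} and says only that it follows the lines of the appendix proof of Theorem \ref{conditionprop} (full details in Maor (2010)), and your sketch is exactly that adaptation — split on $\omega\in K_i$ versus $\omega\notin K_i$, compare against the mixture of $GT_j^*$ and $\sigma_j^*$, and read off (b) from the stage-one deviations $\sigma_i'\in F_i$ and (c) from the three deviation families behind $g_i^1$ (first-stage departure other than $\tau_i$), $g_i^2$ (join the cooperative stream), and $g_i^3$ (play $\tau_i$ once, exploit at stage 2, revert). Those identifications, and the algebra they lead to, are the right ones. (One caution: your derivation makes the $g_i^3$-constraint bind exactly when $u_i(\tau)-u_i(\sigma_i,\tau_j)+(u_i(\sigma_i',\tau_j)-u_i(\sigma))\lambda_i(\omega)$ is \emph{positive}; the set $H_i^3(\omega)$ as printed in the statement has the reverse inequality, so reconcile the signs rather than force your computation to match the printed set.)

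Two steps in your write-up are genuinely gapped. First, the opening claim that ``any profitable deviation can be taken to differ from $\eta_i^*(K_i)$ only in the first-stage action'' is false, and you tacitly abandon it later: the $g_i^2$ and $g_i^3$ deviations differ from $\sigma_i^*$ at later stages as well. The correct reduction is to finitely many deviation \emph{shapes}, using that once player $i$ stops playing $\tau_i$ both types of player $j$ play $\sigma_j$ forever and $\sigma_i$ is a best reply to $\sigma_j$; and for $\omega\in K_i$ you must also dispose of \emph{delayed} defections (cooperate through stage $k$, then play $\sigma_i'$), where player $i$'s belief about $K_j$ has been updated by observing $j$'s actions — this is where condition (a) is actually used, not a triviality absorbed into ``routine discounting arithmetic.'' Second, and relatedly, the necessity of (a) is not ``forced exactly as in Theorem \ref{theorem complete}.'' In the Prisoner's Dilemma the action witnessing $\lambda_i(\omega)<\lambda_i^0$ (namely $D$) also gains against a defecting opponent, so $f_i>1$ off $\Lambda_i$ and (a) follows from (b). In a general game the witnessing action $\sigma_i'$ may do strictly worse than $\tau_i$ against $\sigma_j$ (it need not lie in $F_i$), so when $P_i(K_j\mid\omega)$ is small the stage-one deviation to $\sigma_i'$ need not be profitable, and your sentence ``for $\sigma_i'\notin F_i$ the deviation is never profitable'' already presupposes (a). To get necessity of (a) you need, e.g., the deviation that keeps playing $\tau_i$ and defects with $\sigma_i'$ only after a long observed cooperative history, when the conditional probability of $K_j$ is close to $1$ (the case $P_i(K_j\mid\omega)=0$ being handled by (b) directly). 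With these two repairs — a precise enumeration of deviation shapes including delayed ones, and a correct necessity argument for (a) — your outline becomes a proof along the lines the paper intends.
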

It is worth noting that the only difference between Theorem \ref{conditionprop} and Theorem \ref{generalconditionprop}
is in condition (c): in the general case, player $i$ $(1-g_i)$-believes that the event $K_j$ does not hold
at every $\omega \not\in K_i$,
and not $(1-f_i)$-believes that this event does not hold.
The reason for this difference is that in general games players have more options to deviate than in the Prisoner's Dilemma,
and therefore additional constraints affect the equilibrium conditions.

As in Section \ref{beliefsec}, the
conditions in Theorem \ref{generalconditionprop} are equivalent to
the following condition, which links the theorem to the concept of
$f$-common-belief: each player either $f$-believes that $K_1\cap
K_2$ is a common-$f$-belief or $(1-g)$-believes that the event
``$K_1\cap K_2$ is not a common-$f$-belief'' is a
common-$(1-g)$-belief (see Maor (2010) for details).

The proof of Theorem \ref{generalconditionprop} follows the lines of the proof of Theorem \ref{conditionprop};
the interested reader is referred to Maor (2010) for the complete proof.

If $\Gamma$ is not a
$2\times2$ game, there may be $\omega\in\Omega$ such
that $f_i(\omega)>g_i(\omega)$.
In this case
$(B_1^{f_1}(D^f(\Lambda)),B_2^{f_2}(D^f(\Lambda)))$ may not be a pair of cooperation events.
This point is illustrated in Examples \ref{example5} and \ref{example6} below.
In these two examples we consider the following $3 \times 3$ repeated game:

\begin{picture}(265,90)(15,-20)
\put( 15,8){$C$}
\put( 15,28){$D$}
\put( 80,50){$D$}
\put(160,50){$C$}
\put(15,-12){$N$}
\put(240,50){$N$}
\put(40,-20){\numbercellong{$0,\cdot$}{}}
\put( 40, 0){\numbercellong{$0,4$}{}}
\put(40,20){\numbercellong{$1,1$}{}}
\put(120,-20){\numbercellong{$a,\cdot$}{}}
\put(120,0){\numbercellong{$3,3$}{}}
\put(120,20){\numbercellong{$4,0$}{}}
\put(200,-20){\numbercellong{$\cdot,\cdot$}{}}
\put(200,0){\numbercellong{$\cdot,a$}{}}
\put(200,20){\numbercellong{$\cdot,0$}{}}
\end{picture}
\newline
where $a>4$; payoffs that are not indicated in the matrix can be arbitrary. Here
$\sigma=(D,D)$, $\tau=(C,C)$, and
$\lambda_1^0=\lambda_2^0=\frac{a-3}{a-1}$. In both examples,
$f_i(\omega)=g_i^2(\omega)=\frac{1-\lambda_i(\omega)}{2\lambda_i(\omega)}$,
$g_i^1=\frac{1}{a-3}$, and
$g_i^3(\omega)=\frac{1}{(a-1)\lambda_i(\omega)}$.

In the following example $B_i^{f_i}(D^f(\Lambda))\ne\emptyset$ for
$i=1,2$, but these sets do not satisfy the conditions in Theorem
\ref{generalconditionprop}. Moreover, we prove that in this
example there are no non-trivial pairs of cooperation events.
For simplicity the example uses an infinite state space. A similar construction is possible with a finite state space.

\begin{example}
\label{example5}

Let $a=6$.
The set of states of the world is $\Omega = (0,1)^2$,
and the beliefs of the players are derived from a common prior, the uniform distribution over $\Omega$.
One can verify that
$\Lambda_1=[3/5,1)\times(0,1)$ and $\Lambda_2=(0,1)\times[3/5,1)$.
Because $f_i(\omega)\le
f_i(3/5)={1\over3}<1-3/5=P_i(\Lambda_j\mid\omega)$ for every
$\omega\in\Lambda_i$,
it follows that $D_i^1(\Lambda_1,\Lambda_2)=\Lambda_i$, and
therefore $B_i^{f_i}(D^f(\Lambda_1,\Lambda_2))=D_i^f(\Lambda_1,\Lambda_2)=\Lambda_i$. But, for
$\omega\notin\Lambda_i$,
$P_i(\lambda_j\ge3/5 \mid\omega)=2/5>{1\over3}=g_i^1\ge g_i(\omega)$, and
therefore, in contrast to previous examples, the second condition in Theorem
\ref{generalconditionprop} does not hold, and
$(B_1^{f_1}(D^f(\Lambda)),B_2^{f_2}(D^f(\Lambda))))$ is not a pair of cooperation events.

We now argue that the only pair of cooperation events is $(\emptyset,\emptyset)$.
Suppose to the contrary that there is
a pair of non-empty cooperation events $(K_1,K_2)$. For $i=1,2$ denote
$\lambda_i^*:=\inf\{\lambda_i(\omega)\mid\omega\in K_i\}$. Since
$K_i\subseteq\Lambda_i$, $\lambda_i^*\ge3/5$. Note that
$P_i(K_j\mid\omega)$ is independent of the state of the world
$\omega$; denote this quantity by $P_i(K_j)$. From the first inequality
of Theorem \ref{generalconditionprop}, we have $P_i(K_j)\ge
f_i(\omega)$ for every $\omega\in K_i$, and since $f_i$ is
continuous, $P_i(K_j)\ge f_i(\lambda_i^*)$. We now argue that if
$\lambda_i(\omega)>\lambda_i^*$, then $\omega\in K_i$. Otherwise
we deduce from the second inequality of Theorem
\ref{generalconditionprop}, that $P_i(K_j)\le g_i(\omega)\le
g_i^2(\omega)=f_i(\omega)$.
Since $\lambda_i(\omega)>\lambda_i^*$ it follows that $f_i(\omega)<f_i(\lambda_i^*)\le
P_i(K_j)$, a contradiction. Therefore, we have that $P_i(K_j)=1-\lambda_j^*$.

Next, we argue that $P_i(K_j)=f_i(\lambda_i^*)$. Otherwise
$P_i(K_j)>f_i(\lambda_i^*)$, and there is a state of the world
$\omega\in\Omega$ such that
$P_i(K_j)>f_i(\omega)>f_i(\lambda_i^*)$. From the definition of
$\lambda_i^*$ we have $\omega\notin K_i$, but then we should have
$P_i(K_j)\le g_i^2(\omega)=f_i(\omega)$, a contradiction. We
conclude that $1-\lambda_j^*=f_i(\lambda_i^*)$ for $i=1,2$.
Because $f_i(\lambda_i)=\frac{1-\lambda_i}{2\lambda_i}$, we have
that
$\frac{1-\lambda_2^*}{2\lambda_2^*}=1-\lambda_1^*=2\lambda_1^*(1-\lambda_2^*)$,
or equivalently, $\lambda_1^*\lambda_2^*={1\over4}$.
But $\lambda_i^*\ge3/5$ for $i=1,2$, a contradiction.
\end{example}

In the following example $B_i^{f_i}(D^f(\Lambda))\ne\emptyset$ and
does not satisfy the conditions in Theorem
\ref{generalconditionprop}, but for a smaller non-empty set, the
conditions hold, and therefore define a Bayesian equilibrium.

\begin{example}
\label{example6}

Let $a=5$, and $\Omega=\{{1\over4},{1\over2},{3\over4}\}^2$.
The beliefs of the players are derived from a common prior, which is the uniform distribution over $\Omega$.
Here $\lambda_i^0={1\over2}$, and therefore
$\Lambda_i=\{\lambda_i(\cdot)\ge{1\over2}\}$. As in the Example
\ref{example5},
$B_i^{f_i}(D^f(\Lambda))=D_i^f(\Lambda_i,\Lambda_j)=\Lambda_i$, but the
second condition of Proposition \ref{generalconditionprop} does
not hold because of $g_i^1$:
$P_1(\Lambda_2\mid({1\over4},\lambda_2))=2/3>{1\over2}=g_i^1$.

For $C_i=\{\lambda_i={3\over4}\}$ we get $D_i^f(C_i,C_j)=C_i$, and  one can verify that the
second condition of Proposition \ref{generalconditionprop} does
hold. Therefore,
$$
\eta_i^*(\omega)=
\left\{\begin{array}{ll}
\tau_i^*& \lambda_i={3\over4},\\
\sigma_i& \lambda_i={1\over4},{1\over2},
\end{array}
\right.
$$
defines a Bayesian equilibrium.
\end{example}

\subsection{``Almost'' Complete Information}
\label{section almost}

Monderer and Samet (1989) use the notion of common $p$-belief to measure approximate common knowledge.
A natural question is then, whether when there is common $(1-\ep)$-belief
regarding the discount factor for sufficiently small $\ep > 0$,
there is a pair of cooperation events that is close to $(\Lambda,\Lambda)$,
the largest pair of cooperation events in the complete information case.

In this section we provide two natural definitions for the notion ``almost complete information'';
in one the answer to the question we posed is negative, in the other it is positive.
We present the results without proofs;
the interested reader is referred to
Maor (2010)
for additional examples and for the complete proofs,
which are basically applications of Theorems \ref{conditionprop} and \ref{generalconditionprop} and results from Monderer and Samet (1989).
The discussion in this section is related to Monderer and Samet (1996),
who show that the concept of common repeated $p$-belief is related to
the continuity of the Nash equilibrium correspondence
relative to the information structure (see also Einy et al. (2008)).

Example \ref{prisonerex4} shows that when each player approximately knows the
other player's discount factor,
the cooperation events may be
significantly different from $\Lambda$;
this is true even if we consider only
$\delta$-equilibrium for some $\delta>0$.
Other concepts of perturbation of the complete information case are also possible, and they may yield different results.
For example, the structure theorem in Weinstein and Yildiz (2007)
shows that there are arbitrary small perturbations of the product topology of beliefs that make no conditional grim trigger strategy rationalizable.

Following Monderer and Samet (1989), we can suggest the following
definition of almost complete information.
\begin{definition}
\label{defin almost comp MS} Assume a common prior $P$ over the
set of states of the world. Let $\epsilon,\delta>0$. We say that the
discount factors are \emph{almost complete information} with
respect to $\epsilon$ and $\delta$, if the set of states of the
world in which the true discount factors are
common-$(1-\epsilon)$-belief has probability at least $1-\delta$.
\end{definition}
In other words, this definition means that in most states of the world, the true state of nature in a common $p$-belief for a high $p$,
that is, the true state of nature is known with a high probability, that fact is known with high probability, etc.

From Theorem B in Monderer and Samet (1989) we deduce that if the
number of states of nature is finite, there
are a strategy profile $\eta$ and an event $\Omega'$ with
probability at least $(1-2\epsilon)(1-\delta)$, such that (a)
$\eta(\omega)=\eta^*(\Lambda,\Lambda)(\omega)$ for every
$\omega\in\Omega'$, and (b) $\eta$ is an $\epsilon'$-equilibrium
for $\epsilon'>4M\epsilon$,
where
$M$ is
the maximal absolute value of the payoff that a player can obtain in any state of the world:
\[ M := \sup_{\omega \in \Omega} \max_{a \in A_1 \times A_2} \max_{i=1,2}\frac{u_i(a)}{1-\lambda_i(\omega)} < \infty. \]
In other words, there is an
$\epsilon'$-equilibrium that coincides over a large set
with the conditional grim trigger equilibrium of maximum
cooperation in the complete information case.

In the Prisoner's Dilemma and other $2\times2$ games, this
profile $\eta$ can indeed be a conditional grim trigger profile,
as the following proposition states.
In the proposition's statement,
$f^\ep_i$ is the function defined in (\ref{equ fi}) after subtracting $\ep$ from the numerator.
\begin{proposition}
\label{cor 2 action almost} Suppose each player has two actions.
Let $\epsilon>0$ and $\delta>0$ be given, and denote
$M_0:=2\max_{i=1,2}(u_i(\sigma)-u_i(\tau_i,\sigma_j))$. Assume that
the discount factors are almost complete information with respect
to $\epsilon$ and $\delta$. Then, for every $\epsilon'\ge
M_0\epsilon$, the strategy profile
$\eta^*(B_1^{f^{\epsilon'}_1}(D^{f^{\epsilon'}}(\Lambda)),B_2^{f^{\epsilon'}_2}(D^{f^{\epsilon'}}(\Lambda)))$
is an $\epsilon'$-equilibrium, and $P(\Lambda\setminus
D^{f^{\epsilon'}}(\Lambda))<\delta$.
\end{proposition}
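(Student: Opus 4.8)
The plan is to establish the two assertions of the proposition separately, both resting on the same structural fact. Write $K_i:=B_i^{f^{\epsilon'}_i}(D^{f^{\epsilon'}}(\Lambda))$. Applying Lemma~\ref{lemma1} with the function $f^{\epsilon'}$ in place of $f$ (the required hypothesis $\{f^{\epsilon'}_i\le0\}\subseteq\Lambda_i$ holds in the cases of interest, and trivially so for the Prisoner's Dilemma, where $f^{\epsilon'}_i>0$ everywhere; the remaining states are those where the one-shot temptation is itself at most $\epsilon'$, hence harmless), one obtains $K_i=D_i^{f^{\epsilon'}}(\Lambda_i,\Lambda_j)$, that is, $(K_1,K_2)$ is the \emph{largest} pair with $K_i\subseteq\Lambda_i$ satisfying $P_i(K_j\mid\omega)\ge f^{\epsilon'}_i(\omega)$ on $K_i$ for $i=1,2$. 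Thus $(K_1,K_2)$ automatically satisfies the $\epsilon'$-relaxation of condition~(b) of Theorem~\ref{conditionprop}, and its maximality says that no $\omega\in\Lambda_i\setminus K_i$ can be adjoined to $K_i$ without violating that relaxation.

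For the $\epsilon'$-equilibrium claim I would follow the structure of the proof of Theorem~\ref{conditionprop} for two-action games. First one notes that under $\eta^*(K_1,K_2)$ the continuation play from the second stage on is a Nash equilibrium in every contingency --- an exact one after a first stage in which both played $\tau$, where it is common knowledge that both discount factors exceed their thresholds and ``cooperate forever'' is therefore optimal (this uses only that each player knows his \emph{own} discount factor, not the opponent's exact one), and ``play $\sigma$ forever'' after any other first stage --- so that the only potentially profitable deviations are (i) a player $i$ with $\omega\in K_i$ replacing $GT^*_i$ by a one-shot first-stage deviation (or cashing in one stage later, in the thin boundary layer $K_i\setminus\Lambda_i$), and (ii) a player $i$ with $\omega\notin K_i$ replacing $\sigma^*_i$ by $GT^*_i$. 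In each case the gain is an explicit affine function of $p:=P_i(K_j\mid\omega)$ whose coefficients are the stage-game quantities appearing in~(\ref{equ fi}). For (i), $\omega\in K_i$ yields $p\ge f^{\epsilon'}_i(\omega)$, and substituting this into the affine expression bounds the gain by exactly $\epsilon'$ --- which is precisely why $\epsilon'$ is subtracted from the numerator of $f_i$. For (ii), the maximality half of Lemma~\ref{lemma1}(1) gives $p<f^{\epsilon'}_i(\omega)$ whenever $\omega\in\Lambda_i\setminus K_i$, while for $\omega\notin\Lambda_i$ one has $p\le1$ and the relevant threshold is already $\ge1$; in either case the gain is at most $\epsilon'$ (in fact nonpositive off $\Lambda_i$). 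Summing over $i=1,2$ gives that $\eta^*(K_1,K_2)$ is an $\epsilon'$-equilibrium.

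For the probability estimate, let $\Omega^*$ be the event that the realized state of nature $\lambda(\omega)$ is common $(1-\epsilon)$-belief; by hypothesis $P(\Omega^*)\ge1-\delta$. I claim $\Lambda\cap\Omega^*\subseteq D^{f^{\epsilon'}}(\Lambda)$, which yields $P(\Lambda\setminus D^{f^{\epsilon'}}(\Lambda))\le P(\Omega\setminus\Omega^*)\le\delta$ (the strict inequality in the statement following from the slack inherent in Monderer and Samet's common $p$-belief estimate). Fix $\omega\in\Lambda\cap\Omega^*$. By Proposition~\ref{prop2} applied to the constant function $1-\epsilon$, there is an event $D\ni\omega$ with $D\subseteq B_i^{1-\epsilon}(\{\omega':\lambda(\omega')=\lambda(\omega)\})$ and $D\subseteq B_i^{1-\epsilon}(D)$ for $i=1,2$. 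Because each player knows his own discount factor, $B_i^{1-\epsilon}(\{\lambda=\lambda(\omega)\})\subseteq\{\lambda_i=\lambda_i(\omega)\}$, so $D\subseteq\{\lambda=\lambda(\omega)\}\subseteq\Lambda$ (the last inclusion because $\omega\in\Lambda$), and in particular $f^{\epsilon'}_i\equiv f^{\epsilon'}_i(\lambda(\omega))$ on $D$. This is where $\epsilon'\ge M_0\epsilon$ enters: since $\sigma$ is a Nash equilibrium of the stage game, each numerator $u_i(\sigma_i',\sigma_j)-u_i(\tau_i,\sigma_j)$ in~(\ref{equ fi}) is at most $M_0/2$, whereas on $\Lambda_i$ the corresponding denominator is at least that numerator; hence every term in the maximum defining $f^{\epsilon'}_i$ is either nonpositive or at most $1-\epsilon'/(M_0/2)\le1-2\epsilon$, so $f^{\epsilon'}_i\le1-\epsilon$ on all of $\Lambda$, in particular on $D$. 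Therefore $P_i(A\mid\omega')\ge1-\epsilon\ge f^{\epsilon'}_i(\omega')$ for every $\omega'\in D$ and every event $A$, which upgrades the two inclusions above to $D\subseteq B_i^{f^{\epsilon'}_i}(\Lambda)$ and $D\subseteq B_i^{f^{\epsilon'}_i}(D)$; by Proposition~\ref{prop2} applied to the function $f^{\epsilon'}$, $\Lambda$ is common $f^{\epsilon'}$-belief at $\omega$, that is $\omega\in D^{f^{\epsilon'}}(\Lambda)$, as claimed.

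The step I expect to be the main obstacle is the first part: carefully identifying the classes of deviations that can be profitable and verifying that each is bounded by $\epsilon'$, uniformly over all two-action games rather than only the Prisoner's Dilemma --- in particular when $\sigma$ is mixed, when $f_i$ is a genuine maximum over several one-shot deviations, and at states whose discount factors are so small that the denominators in~(\ref{equ fi}) must be handled via the infimum/supremum conventions. Checking that the continuation after the first stage is an (exact, or merely $\epsilon'$-) equilibrium in the boundary layer $K_i\setminus\Lambda_i$, and that multi-stage deviations are dominated by the one-shot deviations of types (i) and (ii), is the point that requires the most care; the probability estimate is then essentially a direct combination of Proposition~\ref{prop2} with Monderer and Samet's common $p$-belief machinery.
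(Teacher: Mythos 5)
A preliminary remark: the paper itself contains no proof of this proposition --- Section 6.2 explicitly defers all proofs to Maor (2010), indicating only that they are applications of Theorems \ref{conditionprop}/\ref{generalconditionprop} and of Monderer--Samet --- so I can only measure your proposal against that indicated strategy, which is indeed the route you follow. Your second half is essentially correct: applying Proposition \ref{prop2} first with the constant function $1-\epsilon$ and then with $f^{\epsilon'}$, using the fact that each player knows his own discount factor to force the evident event into $\{\lambda=\lambda(\omega)\}\subseteq\Lambda$, and checking that $\epsilon'\ge M_0\epsilon$ makes $f^{\epsilon'}_i\le 1-2\epsilon$ on $\Lambda_i$ (since the numerator in (\ref{equ fi}) is at most $M_0/2$ and the denominator dominates it on $\Lambda_i$) is exactly the right use of the Monderer--Samet machinery, and it correctly explains where $M_0$ comes from. (The strict inequality $<\delta$ versus $\le\delta$ is cosmetic.)

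The genuine gap is in the $\epsilon'$-equilibrium half, at the places you yourself flag. (1) On the boundary layer $K_i\setminus\Lambda_i$ the dangerous deviation from $GT^*_i$ is the \emph{delayed} one (cooperate at stage 1, defect at stage 2 against a revealed cooperator); in the Prisoner's Dilemma its gain is $p\lambda_i\frac{1-3\lambda_i}{1-\lambda_i}$, which is \emph{increasing} in $p$, so your device of substituting the lower bound $p\ge f^{\epsilon'}_i(\omega)$ cannot bound it. What does bound it is the separate observation that $\omega\in K_i$ forces $f^{\epsilon'}_i(\omega)\le 1$, hence $\frac{1-3\lambda_i}{1-\lambda_i}\le\epsilon'$; this step is absent, and your parenthetical claim that after mutual cooperation it is ``common knowledge that both discount factors exceed their thresholds'' is false precisely on $K_i\setminus\Lambda_i$. (2) Off $K_i$ the relevant threshold is not $f_i$ but $g_i=\min\{g_i^1,g_i^2,g_i^3\}$ of Theorem \ref{generalconditionprop}; for a general two-action game you must verify $f_i\le g_i$ on $\Lambda_i$ (in particular dispose of the ``cooperate once, then exploit'' deviation behind $g_i^3$), and your assertion that off $\Lambda_i$ ``the relevant threshold is already $\ge 1$'' fails when the temptation payoff $u_i(\sigma_i,\tau_j)$ is large. (3) Most importantly, your dismissal of the hypothesis $\{f^{\epsilon'}_i\le 0\}\subseteq\Lambda_i$ of Lemma \ref{lemma1} as ``harmless'' is not justified: in a two-action game with a large temptation (e.g.\ the Prisoner's Dilemma with the payoff $4$ replaced by $100$), the denominator in (\ref{equ fi}) is negative at low $\lambda_i$, so $f^{\epsilon'}_i<0$ far outside $\Lambda_i$; such states then lie in $B_i^{f^{\epsilon'}_i}(D^{f^{\epsilon'}}(\Lambda))$ automatically, and there the gain from abandoning $GT^*_i$ is of the order of the stage payoffs, not $\epsilon'$, even under fully complete information. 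So either $f^{\epsilon'}$ must be interpreted so that it exceeds $1$ outside $\Lambda_i$, or the argument must be restricted to the class of $2\times2$ games (including the Prisoner's Dilemma) for which this pathology cannot occur; pinning this down is exactly the work your proof still has to do.
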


For games where a player has more than two actions, the profile
$\eta$ may not be a conditional grim trigger strategy profile.
Moreover, there may be no conditional grim trigger
$\epsilon'$-equilibria whatsoever, even in the case of a finite state space (see Example 8.3 in Maor
(2010)). This happens because this definition of almost
complete information allows the existence of ``problematic" states
that occur with small probability, in which the beliefs of the players can
be far from complete information, while conditions (b) and
(c) in Theorem \ref{generalconditionprop} pose demands on all
states of the world.

Therefore, with conditional grim trigger strategies in games that are larger than
$2\times2$, to obtain an $\epsilon'$-equilibrium
that coincides over a large set with the
conditional grim trigger equilibrium of maximum cooperation in the
complete information case,  we need a stronger concept of almost complete
information, which is ``almost complete'' in \emph{all} states of the world, and not merely over a large set of states of the world.

\begin{definition}
\label{defin almost comp strong} Let $\epsilon>0$. We say that the
discount factors are \emph{almost complete information} with
respect to $\epsilon$, if at every state of the world $\omega$,
each player $(1-\epsilon)$-believes that some state of
nature is common-$(1-\epsilon)$-belief in $\omega$.
\end{definition}

When the information is almost complete according to this
definition, we can point at an $\epsilon'$-equilibrium in conditional grim trigger strategies:
\begin{proposition}
\label{prop almost complete} Let $\epsilon>0$. Assume that the
discount factors are almost complete information with respect to
$\epsilon$ (according to Definition \ref{defin almost comp
strong}). Then the strategy profile
$\eta^*(B_1^{1-\epsilon}(D^{1-\epsilon}(\Lambda)),B_2^{1-\epsilon}(D^{1-\epsilon}(\Lambda)))$
is an $\epsilon'$-equilibrium, for every $\epsilon'>M\epsilon$,
where $M>0$ is a constant, independent of the information
structure and of $\epsilon$.
\end{proposition}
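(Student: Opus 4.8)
The plan is to reduce Proposition~\ref{prop almost complete} to an application of Theorem~\ref{generalconditionprop} (or Theorem~\ref{conditionprop} in the Prisoner's Dilemma) together with the properties of the $f$-belief operator collected in Proposition~\ref{prop1}. Write $K_i := B_i^{1-\epsilon}(D^{1-\epsilon}(\Lambda))$, where $\Lambda = \Lambda_1 \cap \Lambda_2$. Since the constant function $1-\epsilon$ is $\Sigma_i$-measurable and $P_i$ is $\Sigma_i$-measurable, each $K_i$ is $\Sigma_i$-measurable; moreover $K_i \subseteq B_i^{1-\epsilon}(\Lambda) \subseteq \Lambda_i$ by monotonicity (Proposition~\ref{prop1}(1)) and the fact that $B_i^{1-\epsilon}(\Lambda) \subseteq \Lambda$, using $\Sigma_i$-measurability of $\Lambda_i$ and part~4 of Proposition~\ref{prop1} — so condition (a) of Theorem~\ref{generalconditionprop} is automatic. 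First I would verify condition (b) exactly as in the proof of Theorem~\ref{2actioncor}: by Lemma~\ref{lemma1}(1), at every $\omega \in D_i^{1-\epsilon}(\Lambda_i,\Lambda_j) = B_i^{1-\epsilon}(D^{1-\epsilon}(\Lambda))$ we have $P_i(K_j \mid \omega) \ge 1-\epsilon$. The point is that the almost-complete-information hypothesis (Definition~\ref{defin almost comp strong}) guarantees that $D^{1-\epsilon}(\Lambda)$ is large: at every $\omega$ each player $(1-\epsilon)$-believes some state of nature is common-$(1-\epsilon)$-belief, and if that state of nature lies in $\Lambda$ (i.e.\ both discount factors are $\ge \lambda_i^0$) then $\omega$ is in the relevant $f$-belief set. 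One must handle the states of nature outside $\Lambda$, which is where the $\epsilon$-loss enters; this is dealt with below by passing to $\epsilon'$-equilibrium rather than exact equilibrium.

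The crux is that $(K_1,K_2)$ need not satisfy condition (c) on the nose — exactly as noted in the remark after Theorem~\ref{2actioncor} and illustrated in Examples~\ref{example5}–\ref{example6}. So instead of proving $(K_1,K_2)$ is a genuine pair of cooperation events, I would show directly that $\eta^*(K_1,K_2)$ is an $\epsilon'$-equilibrium: no player can gain more than $\epsilon'$ by deviating. Following the structure of the proofs of Theorems~\ref{conditionprop} and \ref{generalconditionprop}, any profitable deviation is (equivalent to) a one-shot deviation at the first stage, so it suffices to bound, at each $\omega$, the gain from deviating from the prescribed first-stage action. When $\omega \in K_i$, player $i$ is prescribed $\tau_i$; his conditional payoff from $GT_i^*$ minus his best deviation payoff is nonnegative up to the slack in condition (b) — and since $P_i(K_j \mid \omega) \ge 1-\epsilon$ rather than exactly $1$, the shortfall is $O(\epsilon)$ times a payoff-magnitude constant, hence at most $M\epsilon$ for a suitable $M$ built from $M_0$-type quantities and $\sup_\omega 1/(1-\lambda_i(\omega))$-type bounds. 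When $\omega \notin K_i$, player $i$ plays $\sigma_i$; the gain from deviating to $\tau_i$ (or any $\sigma_i'$) is controlled because $\omega \notin K_i = B_i^{1-\epsilon}(D^{1-\epsilon}(\Lambda))$ forces $P_i$ of the ``others cooperate'' event to be small \emph{in the states of nature where cooperation actually helps}: precisely, by Definition~\ref{defin almost comp strong}, player $i$ $(1-\epsilon)$-believes some state of nature is common-$(1-\epsilon)$-belief, and on that large-probability event either $j$ is outside $\Lambda_j$ (so $j$ does not cooperate, no gain) or the state is in $\Lambda$ and common-$(1-\epsilon)$-belief, which would put $\omega$ into $K_i$ — contradiction except on an event of probability $\le \epsilon$. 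Thus the deviation gain is again $O(\epsilon)$.

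Concretely the key steps, in order, are: (i) $K_i$ is $\Sigma_i$-measurable and $K_i \subseteq \Lambda_i$, so (a) holds; (ii) condition (b) holds with $P_i(K_j\mid\omega) \ge 1-\epsilon$ via Lemma~\ref{lemma1}(1); (iii) translate ``$\omega \notin K_i$'' into a quantitative bound $P_i(\{\omega': \text{$j$ will cooperate and it helps $i$}\}\mid\omega) \le C\epsilon$ using Definition~\ref{defin almost comp strong} and $D^{1-\epsilon}$ iteration; (iv) a one-shot-deviation-principle reduction showing the global deviation gain equals the first-stage conditional deviation gain; (v) plug (ii) and (iii) into the first-stage payoff comparisons from the proofs of Theorems~\ref{conditionprop}/\ref{generalconditionprop} to bound the gain by $M\epsilon$, with $M$ depending only on the stage-game payoffs and on $\sup_\omega \frac{1}{1-\lambda_i(\omega)}$ (which is where the constant $M$ independent of the information structure comes from — note one needs the discount factors bounded away from $1$, or $M$ interpreted accordingly). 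The main obstacle is step~(iii): carefully extracting from the ``$f$-belief'' and ``common-$(1-\epsilon)$-belief'' combinatorics a clean bound on the conditional probability that player $j$ both cooperates \emph{and} that this cooperation is payoff-relevant to $i$, uniformly over $\omega \notin K_i$, and checking that the $\epsilon$-losses accumulated over the belief hierarchy (the $D^{n,1-\epsilon}$ iteration) do not blow up — this is the place where the argument genuinely uses the strength of Definition~\ref{defin almost comp strong} over Definition~\ref{defin almost comp MS}, and it is exactly the kind of estimate Monderer and Samet (1989) carry out for common $p$-belief, so I would mirror their bookkeeping. I would then note that the full details appear in Maor (2010).
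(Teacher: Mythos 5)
The paper does not actually prove Proposition \ref{prop almost complete}; Section \ref{section almost} states it without proof and refers to Maor (2010), noting only that the arguments are applications of Theorems \ref{conditionprop} and \ref{generalconditionprop} and of Monderer--Samet estimates. Your outline follows that indicated route, and your steps (i), (ii) and (iv) are essentially right: $K_i:=B_i^{1-\epsilon}(D^{1-\epsilon}(\Lambda))$ is $\Sigma_i$-measurable, $K_i\subseteq B_i^{1-\epsilon}(\Lambda_i)=\Lambda_i$ (the inclusion must go into $\Lambda_i$, not $\Lambda$, which is not $\Sigma_i$-measurable), Lemma \ref{lemma1} gives $P_i(K_j\mid\omega)\ge 1-\epsilon$ on $K_i$, and the reduction to first-stage deviations is as in the appendix. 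One quantitative point: on $K_i$ you do not need any $\sup_\omega 1/(1-\lambda_i(\omega))$ factor. Against an opponent who defects at stage 1, the $GT^*_i$ continuation is already a best reply (after the punishment starts, $\sigma$ is a stage-game Nash equilibrium), so the loss is the one-stage quantity $u_i(\sigma)-u_i(\tau_i,\sigma_j)$, and the gain on $K_i$ is at most $\epsilon$ times a stage-payoff constant.

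The genuine gap is your step (iii), which you correctly identify as the crux but then only gesture at (``mirror Monderer--Samet's bookkeeping'', ``details in Maor (2010)''). As written, your dichotomy does not yield the bound: from ``player $i$ $(1-\epsilon)$-believes some state of nature is common-$(1-\epsilon)$-belief'' it does not follow, without further structure, that the cooperating-opponent event has probability $O(\epsilon)$ at $\omega\notin K_i$, and your worry about losses accumulating along the $D^{n,1-\epsilon}$ iteration signals that the argument is not in hand. In fact no hierarchy bookkeeping is needed; two structural facts you never invoke do the work. First, since each player knows his own discount factor, $D^{1-\epsilon}(C_s)\subseteq B_1^{1-\epsilon}(C_s)\cap B_2^{1-\epsilon}(C_s)\subseteq C_s$ for every state of nature $s$ (for $\epsilon<1$). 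Second, by the introspection condition $P_i(\{\omega':P_i(\omega')=P_i(\omega)\}\mid\omega)=1$, the set $K_i$ is a union of $i$-types and contains $D^{1-\epsilon}(\Lambda)$, so at every $\omega\notin K_i$ one has $P_i(D^{1-\epsilon}(\Lambda)\mid\omega)=0$, not merely $<1-\epsilon$. Combining these with Definition \ref{defin almost comp strong} at $\omega\notin K_i$: the mass of at least $1-\epsilon$ on events $D^{1-\epsilon}(C_s)$ can only sit on those with $C_s\not\subseteq\Lambda$ (the others lie in $D^{1-\epsilon}(\Lambda)$, which has conditional probability $0$), and each such $D^{1-\epsilon}(C_s)$ is disjoint from $K_j$ (if $s_j<\lambda_j^0$ it lies in $\Lambda_j^c$; if $s_i<\lambda_i^0$, any $\omega'\in D^{1-\epsilon}(C_s)\cap K_j$ would have player $j$ assign probability $\ge 1-\epsilon$ to the two disjoint events $C_s$ and $D^{1-\epsilon}(\Lambda)$, impossible for $\epsilon<1/2$). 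Hence $P_i(K_j\mid\omega)\le\epsilon$ at every $\omega\notin K_i$, in one step. Finally, the caveat you flag but leave unresolved is real: off $K_i$ the best deviation gain is of order $\epsilon\cdot\frac{2\lambda_i(\omega)}{1-\lambda_i(\omega)}-1$ in unnormalized discounted payoffs, so it is $O(\epsilon)$ only through a bound such as $M=\sup_\omega\max_a\max_i u_i(a)/(1-\lambda_i(\omega))$ introduced earlier in Section \ref{section almost} (or by passing to normalized payoffs); this must be built into the constant explicitly rather than left as an aside, since it is what reconciles the estimate with the proposition's claim about $M$.
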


While there can be almost complete information on the discount factors
(according to Definition \ref{defin almost comp strong}) even if there is no common prior,
if there is a common prior,
then this concept is stronger than the
concept in Definition \ref{defin almost comp MS}. In this case,
under a common prior, the $\epsilon'$-equilibrium in conditional grim trigger strategies
described above indeed coincides over a
large set with the conditional grim trigger equilibrium of maximum
cooperation in the complete information case, as stated by the following
proposition. In other words, under a stronger concept of
almost complete information, we have a result that is quite
similar to the one in Theorem B of Monderer and Samet (1989), but
with a conditional grim trigger profile that is defined
explicitly for every state of the world.

\begin{proposition}
\label{cor almost comp} Let $\epsilon>0$. Assume that (a) the
beliefs of the players are derived from a common prior $P$; (b) the
discount factors are almost complete information with respect to
$\epsilon$ (according to Definition \ref{defin almost comp
strong}); and (c) the number of states of nature is finite or
countable. Then (i) the strategy profile
$\eta^*(B_1^{1-\epsilon}(D^{1-\epsilon}(\Lambda)),B_2^{1-\epsilon}(D^{1-\epsilon}(\Lambda)))$
is an $\epsilon'$-equilibrium, for every $\epsilon'>M_1\epsilon$,
where $M_1>0$ is a constant, independent of the players' beliefs and of $\epsilon$; (ii) under this strategy profile, both
players will cooperate if and only if $\omega\in
D^{1-\epsilon}(\Lambda)$; and (iii) $P(\Lambda\setminus
D^{1-\epsilon}(\Lambda))<3\epsilon$.
\end{proposition}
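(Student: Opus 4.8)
The plan is to handle the three assertions in turn; (i) and (ii) are short, while (iii) carries the real work. For (i) I would simply invoke Proposition~\ref{prop almost complete}: hypothesis (b) puts us exactly in the situation of Definition~\ref{defin almost comp strong}, so that proposition already asserts that $\eta^*(B_1^{1-\epsilon}(D^{1-\epsilon}(\Lambda)),B_2^{1-\epsilon}(D^{1-\epsilon}(\Lambda)))$ is an $\epsilon'$-equilibrium for every $\epsilon'>M\epsilon$ with $M$ a universal constant, and one takes $M_1:=M$. Hypotheses (a) and (c) of the present proposition play no role in this part.

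For (ii) I would apply Lemma~\ref{lemma1}(2) with the constant function $f\equiv(1-\epsilon,1-\epsilon)$ and with $C_i:=\Lambda_i$. Since $f_i\equiv1-\epsilon>0$ the set $\{f_i\le0\}$ is empty, so the hypothesis $\{f_i\le0\}\subseteq C_i$ holds trivially, and $\Lambda_i$ is $\Sigma_i$-measurable; the lemma then gives $B_i^{1-\epsilon}(D^{1-\epsilon}(\Lambda))=D_i^{1-\epsilon}(\Lambda_i,\Lambda_j)$ and $D_1^{1-\epsilon}(\Lambda_1,\Lambda_2)\cap D_2^{1-\epsilon}(\Lambda_2,\Lambda_1)=D^{1-\epsilon}(\Lambda)$, where $\Lambda=\Lambda_1\cap\Lambda_2$. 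Writing $K_i:=B_i^{1-\epsilon}(D^{1-\epsilon}(\Lambda))$ for the cooperation regions, this says precisely $K_1\cap K_2=D^{1-\epsilon}(\Lambda)$. Under the conditional grim trigger profile $\eta^*(K_1,K_2)$ both players play the grim trigger course of action, and hence cooperate at every stage, exactly when $\omega\in K_1\cap K_2$; together with the displayed equality this is (ii).

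The substance is (iii). Let $S':=\lambda(\Omega)$ be the set of realized states of nature, which is at most countable by hypothesis (c), and let $G:=\bigcup_{s\in S'}D^{1-\epsilon}(\lambda^{-1}(\{s\}))$ be the event ``some state of nature is common $(1-\epsilon)$-belief''; countability makes $G$ measurable, so Definition~\ref{defin almost comp strong} is meaningful. That definition gives $P_i(G\mid\omega)\ge1-\epsilon$ for every $\omega$ and each $i$, and integrating this pointwise bound against the common prior yields $P(G)\ge1-\epsilon$. The key step is to show $G\cap\Lambda\subseteq D^{1-\epsilon}(\Lambda)$: for $\omega\in G\cap\Lambda$ pick $s\in S'$ with $\omega\in D^{1-\epsilon}(\lambda^{-1}(\{s\}))\subseteq B_i^{1-\epsilon}(\lambda^{-1}(\{s\}))$ for each $i$, so $P_i(\lambda^{-1}(\{s\})\mid\omega)\ge1-\epsilon>0$; since each player knows his own discount factor, $P_i(\{\lambda_i=\lambda_i(\omega)\}\mid\omega)=1$, which forces $s_i=\lambda_i(\omega)$ and hence $s=\lambda(\omega)$. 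Because $\omega\in\Lambda$ we then have $s_i\ge\lambda_i^0$, i.e.\ $\lambda^{-1}(\{s\})\subseteq\Lambda$, and monotonicity of $D^{1-\epsilon}(\cdot)$ (from Proposition~\ref{prop1}(1), iterated) gives $\omega\in D^{1-\epsilon}(\lambda^{-1}(\{s\}))\subseteq D^{1-\epsilon}(\Lambda)$. Hence $\Lambda\setminus D^{1-\epsilon}(\Lambda)\subseteq\Lambda\setminus G\subseteq\Omega\setminus G$, so $P(\Lambda\setminus D^{1-\epsilon}(\Lambda))\le 1-P(G)\le\epsilon$, which is comfortably within the stated bound $3\epsilon$.

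The step I expect to be the main obstacle is the inclusion $G\cap\Lambda\subseteq D^{1-\epsilon}(\Lambda)$: one must upgrade ``\emph{some} state of nature is common $(1-\epsilon)$-believed at $\omega$'' to ``\emph{the true} state of nature $\lambda(\omega)$ is common $(1-\epsilon)$-believed'' and only then intersect with the individual-rationality region $\Lambda$, and this upgrade is exactly where the standing assumption that each player knows his own discount factor is used; without it one could conclude something about belief operators but not about $\Lambda$ itself. An alternative route, closer to Monderer and Samet (1989) and probably the source of the constant $3\epsilon$, would be to first show that Definition~\ref{defin almost comp strong} together with a common prior implies almost complete information in the sense of Definition~\ref{defin almost comp MS} with $\delta$ of order $\epsilon$, and then quote their probability estimate, at the cost of a less sharp constant.
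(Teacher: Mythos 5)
Your proposal cannot be checked line-by-line against the paper, because the paper states Proposition \ref{cor almost comp} without proof: Section \ref{section almost} explicitly defers all proofs to Maor (2010), saying only that they are applications of Theorems \ref{conditionprop} and \ref{generalconditionprop} and of results of Monderer and Samet (1989). Judged on its own, your argument is correct, and parts of it clearly match the intended route: (i) is indeed nothing more than Proposition \ref{prop almost complete} restated (hypotheses (a) and (c) are not needed there), and (ii) is the natural application of Lemma \ref{lemma1}(2) with the constant function $f\equiv(1-\epsilon,1-\epsilon)$ and $C_i=\Lambda_i$, giving $K_1\cap K_2=D^{1-\epsilon}(\Lambda)$, which is exactly the ``both cooperate'' event under the conditional grim trigger profile. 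For (iii) you give a direct, self-contained argument: countability makes the event $G$ (``some state of nature is common $(1-\epsilon)$-belief'') measurable, Definition \ref{defin almost comp strong} gives $P_i(G\mid\omega)\ge 1-\epsilon$ everywhere, the common prior then yields $P(G)\ge 1-\epsilon$, and on $G\cap\Lambda$ the standing assumption that each player knows his own discount factor pins down the commonly believed state of nature to be the true one, so monotonicity of the belief operators gives $G\cap\Lambda\subseteq D^{1-\epsilon}(\Lambda)$ and hence $P(\Lambda\setminus D^{1-\epsilon}(\Lambda))\le\epsilon<3\epsilon$. This is sharper than the stated bound $3\epsilon$, whose constant presumably reflects the Monderer--Samet-style estimates (in the spirit of Definition \ref{defin almost comp MS} and their Theorem B) used in the referenced proof; your alternative route in the last paragraph correctly identifies that option. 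Two cosmetic points you should make explicit: one may assume $\epsilon<1$ without loss of generality (for $\epsilon\ge1$ all three claims are degenerate and, e.g., the hypothesis $\{f_i\le 0\}\subseteq C_i$ of Lemma \ref{lemma1}(2) would fail as stated), and the step ``$P_i(\lambda^{-1}(\{s\})\mid\omega)>0$ together with $P_i(\{\lambda_i=\lambda_i(\omega)\}\mid\omega)=1$ forces $s_i=\lambda_i(\omega)$'' deserves the one-line justification that the intersection of these two events has positive conditional probability and is therefore nonempty.
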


\subsection{Further Generalizations}
Theorem \ref{conditionprop} can also be generalized for other
cases, with the necessary adjustments to the functions $f_i$ and
$g_i$ and the sets $\Lambda_i$.
One generalization is for the case of incomplete information regarding one's own discount
factor.
A second generalization concerns games with
more than two players.
In these cases, using similar ideas one can obtain analog results regarding sufficient conditions for equilibria in
conditional grim trigger strategies.
Because the information structure and the game structure are more
complicated in multiplayer games, further assumptions should be
made. E.g., in the case of more than two players, to treat all the other
players within the framework of one function $f$ one needs to take
some ``worst case scenario".  This makes the
conditions only sufficient and not necessary. See Maor (2010) for
details.

Lastly, the theorem can be extended to general games with incomplete
information.
Let $G=\left(N,(S,\mathcal{S}),\Pi,(A_i)_{i\in N},(u_i)_{i\in
N}\right)$ be a general two-player Bayesian game. To avoid
measurability problems, assume that the set of states of the world is
finite or countable. Assume that there is a course of action
profile $\sigma^*=(\sigma_1^*,\sigma_2^*)$ that is an equilibrium for all states of
nature when information is complete,
and that there is another course of action profile,
$\tau^*=(\tau_1^*,\tau_2^*)$ that is an equilibrium only in some
states of nature when the information is complete.
Suppose that the supports of $\tau_i^*$ and
$\sigma_i^*$ are disjoint in all states of the world; that is, it
is discernable whether player $i$ plays $\sigma_i^*$ or
$\tau_i^*$. A strategy of player $i$ is an $i$-measurable function
that assigns a course of action of player
$i$ to each state of the world.

Let $\Lambda_i\subseteq\Omega$ be the event ``player $i$ believes
that he cannot benefit by deviating from the profile $\tau^*$''.
That is, for every $\omega\in\Lambda_i$ and every course of action
$\sigma_i'$ of player $i$, $E_i(u_i(\tau^*)\mid\omega)\ge
E_i(u_i(\sigma_i',\tau_j^*)\mid\omega)$.

\begin{theorem}
\label{generalgamesprop} In the game $G$ there exist
$i$-measurable functions $0\le f_i,g_i\le1$, $i=1,2$, such that if
$K_1\subseteq\Lambda_1$ and $K_2\subseteq\Lambda_2$, the strategy
profile $\eta^*(K_1,K_2)=(\eta_1^*(K_1),\eta_2^*(K_2))$ is a
Bayesian equilibrium if and only if for $i=1,2$,
\begin{enumerate}
  \item $P_i(K_j\mid\omega)\ge f_i(\omega)$ for every $\omega\in K_i$, and
  \item $P_i(K_j\mid\omega)\le g_i(\omega)$ for every $\omega\notin K_i$.
\end{enumerate}
or, in other words, if and only if
\begin{enumerate}
  \item $K_i \subset B_i^{f_i}(K_j)$, and
  \item $K_i^c \subset B_i^{1-g_i}(K_j^c)$.
\end{enumerate}
\end{theorem}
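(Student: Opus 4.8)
\textit{Proof proposal.} The plan is to follow the proofs of Theorems \ref{conditionprop} and \ref{generalconditionprop}: reduce the equilibrium property to a family of pointwise best-response conditions, show that each such condition depends on $K_j$ only through the number $P_i(K_j\mid\omega)$, and then use linearity of payoffs in that number to turn ``$\tau_i^*$ is a best response at $\omega$'' into an inequality $P_i(K_j\mid\omega)\ge f_i(\omega)$ and ``$\sigma_i^*$ is a best response at $\omega$'' into $P_i(K_j\mid\omega)\le g_i(\omega)$.

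First I would reduce to a one-player statement. Since $\eta_j^*(K_j)$ is held fixed, $\eta^*(K_1,K_2)$ is a Bayesian equilibrium if and only if, for each $i\in\{1,2\}$ and each state of the world $\omega$, the course of action prescribed by $\eta_i^*(K_i)$ at $\omega$ is a best response of player $i$ against $\eta_j^*(K_j)$ under the belief $P_i(\cdot\mid\omega)$. From player $i$'s viewpoint at $\omega$, player $j$ follows $\tau_j^*$ on the event $K_j$ and $\sigma_j^*$ on its complement; because (as in the model of Section \ref{section model}) the state of nature enters the payoffs only through the discount factors, each known to its owner, player $i$'s best-response problem against $\eta_j^*(K_j)$ at $\omega$ is the same as his best-response problem against the lottery that plays $\tau_j^*$ with probability $q:=P_i(K_j\mid\omega)$ and $\sigma_j^*$ with probability $1-q$. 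Moreover, since the supports of $\tau_j^*$ and $\sigma_j^*$ are disjoint in every state of the world, player $i$ can tell which of the two $j$ is following and react to each separately; consequently the payoff of any fixed course of action of player $i$ against this lottery is affine in $q$, and the value of player $i$'s best response is a maximum of affine functions of $q$.

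Next I would extract the thresholds. Define $f_i(\omega)$ as the infimum of the set of $q\in[0,1]$ for which $\tau_i^*$ is a best response against the above lottery at $\omega$, and $g_i(\omega)$ as the supremum of the set of $q\in[0,1]$ for which $\sigma_i^*$ is a best response, using the convention (as in the footnote to Theorem \ref{generalconditionprop}) that the infimum of the empty set is $1$ and the supremum of the empty set is $0$. These are $i$-measurable functions with values in $[0,1]$, since the coefficients of the affine branches depend on $\omega$ only through quantities that player $i$ knows. The crux is that each of these two best-response sets is an \emph{interval}: for $\tau_i^*$, its value $L(q)$ is affine, it lies below the best-response value $V(q)$ everywhere, and $L(1)=V(1)$ whenever $\omega\in\Lambda_i$ (this is precisely the inequality defining $\Lambda_i$); since $\{q:L(q)=V(q)\}$ is an intersection of half-lines and contains $1$, it meets $[0,1]$ in an interval of the form $[f_i(\omega),1]$. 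Symmetrically, for $\sigma_i^*$ the corresponding set is an interval of the form $[0,g_i(\omega)]$, using that $\sigma^*$ is an equilibrium of the complete-information game for every state of nature, so that $\sigma_i^*$ is a best response against $\sigma_j^*$ at $q=0$. Hence, for $\omega\in K_i$ (so $\omega\in\Lambda_i$), ``$\tau_i^*$ is a best response at $\omega$'' holds if and only if $P_i(K_j\mid\omega)\ge f_i(\omega)$, and for $\omega\notin K_i$, ``$\sigma_i^*$ is a best response at $\omega$'' holds if and only if $P_i(K_j\mid\omega)\le g_i(\omega)$.

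Collecting these equivalences over all $i$ and $\omega$ shows that $\eta^*(K_1,K_2)$ is a Bayesian equilibrium if and only if conditions 1 and 2 hold, and the belief-operator reformulation follows by rewriting $P_i(K_j\mid\omega)\ge f_i(\omega)$ as $K_i\subseteq B_i^{f_i}(K_j)$ and $P_i(K_j\mid\omega)\le g_i(\omega)$ as $P_i(K_j^c\mid\omega)\ge 1-g_i(\omega)$, i.e.\ $K_i^c\subseteq B_i^{1-g_i}(K_j^c)$. I expect the main obstacle to be the linearity/interval step: one must verify that the disjoint-supports hypothesis genuinely allows player $i$'s best response against the $q$-mixture to be assembled from his best responses against $\tau_j^*$ and against $\sigma_j^*$ separately, so that all relevant payoffs are piecewise-affine convex in $q$, with the endpoint behaviour at $q=0$ and $q=1$ controlled by the definitions of $\Lambda_i$ and of the complete-information equilibrium $\sigma^*$; once this is in place, the enumeration of the deviation types (the analogue of $g_i^1,g_i^2,g_i^3$ in Theorem \ref{generalconditionprop}) and the handling of degenerate cases are routine. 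The details for the general extensive-form setting are carried out in Maor (2010).
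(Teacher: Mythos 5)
Your overall architecture mirrors the paper's proof of Theorem \ref{conditionprop}: reduce the equilibrium property to interim best-response conditions at each $\omega$, argue that the relevant payoffs are affine in $q=P_i(K_j\mid\omega)$, and extract thresholds $f_i,g_i$ from a convexity/interval argument whose endpoints at $q=1$ and $q=0$ are pinned down by the definition of $\Lambda_i$ and by the assumption that $\sigma^*$ is an equilibrium at every state of nature. (The paper itself gives no proof of Theorem \ref{generalgamesprop}; it defers to Maor (2010), so the only in-paper benchmark is the appendix proof of Theorem \ref{conditionprop}, whose structure you reproduce.) The interval step and the belief-operator reformulation at the end are fine as far as they go.

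The genuine gap is at your central step, the replacement of ``best response against $\eta_j^*(K_j)$ under $P_i(\cdot\mid\omega)$'' by ``best response against the lottery $q\tau_j^*+(1-q)\sigma_j^*$.'' You justify this by saying that the state of nature enters the payoffs only through the discount factors, each known to its owner --- but that is the model of Section \ref{section model}, not the general Bayesian game $G$ of this theorem, where $u_i$ may depend on components of the state of nature that player $i$ does not know (the paper explicitly notes that realized payoffs need not be observed). Under $P_i(\cdot\mid\omega)$ those components can be correlated with the $\Sigma_j$-measurable event $K_j$, so $E_i\bigl(u_i(a_i,\tau_j^*)\mathbf{1}_{K_j}\mid\omega\bigr)$ need not equal $P_i(K_j\mid\omega)\,E_i\bigl(u_i(a_i,\tau_j^*)\mid\omega\bigr)$: the interim payoff of a course of action against $\eta_j^*(K_j)$ then depends on which event $K_j$ is, not only on its conditional probability, and two events with the same $P_i(K_j\mid\omega)$ can give opposite best-response verdicts. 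Hence no thresholds $f_i,g_i$ that are functions of $\omega$ alone can be extracted by your argument; an ``if and only if'' characterization in terms of $P_i(K_j\mid\omega)$ requires a private-values/conditional-independence hypothesis of the kind you implicitly imported from Section \ref{section model}, and a correct proof must state and use it (or work directly with the conditional expectations, as in Maor (2010)). A smaller point: the disjoint-supports assumption is not what delivers affinity in $q$ --- the payoff of a fixed course of action against a mixture is automatically affine in the mixture weight; disjoint supports serves to make it discernable which course of action player $j$ follows, so that conditional continuation behaviour is well defined, not to justify the linearity step.
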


There are several differences between Theorem \ref{generalgamesprop} and Theorem \ref{generalconditionprop}.
First, here we assume that $K_1\subseteq\Lambda_1$ and $K_2\subseteq\Lambda_2$ whereas in Theorem \ref{generalconditionprop} we showed that
   it was necessary for $\eta^*(K_1,K_2)$ to be a Bayesian equilibrium.
   This weakened result allows us to drop the assumption that $\tau_i^*$ is not a best response to $\sigma_j^*$,
   and only requires that it is discernable whether player $i$ plays $\sigma_i^*$ or $\tau_i^*$ (disjoint supports).%
   \footnote{Still, as before, if we assume that $\tau_i^*$ is not a best response to $\sigma_j^*$ and if the payoffs in $G$ are bounded, then $f_i>0$.}
Second, we do not assume that the realized payoffs are observed,
  so that all a player knows is his expected payoff based on his information on the states of nature.
Third, we do not assume that the payoffs when $\tau^*$ is played are higher than when $\sigma^*$ is played.

For further generalizations and complete the reader is referred to Maor
(2010).

%

\appendix

\section{Appendix: Proof of Theorem \ref{conditionprop}}
\label{appendix proof}

Because a strategy of player $i$ is a $\Sigma_i$-measurable function,
a necessary condition for $\eta^*_i(K_i)$ to be a strategy is that $K_i$ is $\Sigma_i$-measurable.

For each $i \in \{1,2\}$, let $K_i$ be a $\Sigma_i$-measurable event.
We will check when a deviation from $\eta^*(K_1,K_2)$ can be profitable.

Assume first that $\omega\in K_i$. In this case player $i$ is
supposed to follow the grim trigger course of action $GT^*$.

If $\omega \not\in K_j$, player $j$ will play $D$ all along the game, regardless of the play of player $i$.
If $\omega \in K_j$, player $j$ will follow the grim trigger course of action $GT^*$.
It then follows that if player $i$ deviates and plays $D$ at stage $k$,
then he receives the highest payoff if he continues to play $D$ after stage $k$.
Because payoffs are discounted, if a deviation to $D$ at stage $k$ is profitable to player $i$,
then a deviation at stage 1 provides a higher profit.
Therefore, the best possible deviation of player $i$ is to the course of action $D^*$.
This deviation is not profitable if and only if
\begin{equation}
\label{equ89}
\gamma_i(\eta^*(K_i,K_j)\mid\omega) \geq \gamma_i(D^*_i,\eta^*_j(K_i,K_j)\mid\omega),
\end{equation}
where $D^*_i$ is player $i$'s strategy in which he always plays $D$.
One can verify that
\begin{equation}
\gamma_i(\eta^*(K_i,K_j)\mid\omega)=
P_i(K_j\mid\omega)\left(\frac{3}{1-\lambda_i(\omega)}\right)
+(1-P_i(K_j\mid\omega))\frac{\lambda_i(\omega)}{1-\lambda_i(\omega)}.
\label{equ90}
\end{equation}
Indeed, according to player $i$'s belief, with probability
$P_i(K_j\mid\omega)$ player $j$ plays $GT^*$, so they will play
$(C,C)$ at every stage and his payoff will be
$\frac{3}{1-\lambda_i(\omega)}$, and with probability
$(1-P_i(K_j\mid\omega))$ player $j$ plays $D^*$, so in the first
stage he will get 0, and afterwards the
players will play $(D,D)$; therefore in this case player $i$'s payoff is
$\frac{\lambda_i(\omega)}{1-\lambda_i(\omega)}$.
Similarly,
\begin{equation}
\label{equ91}
\begin{array}{ll}
      \gamma_i(D_i^*,\eta_j^*(K_j)\mid\omega)=
      P_i(K_j\mid\omega)\left(4+\frac{\lambda_i(\omega)}{1-\lambda_i(\omega)}\right)
      +(1-P_i(K_j\mid\omega))\left(\frac{1}{1-\lambda_i(\omega)}\right).
\end{array}
\end{equation}
Plugging in (\ref{equ90}) and (\ref{equ91}) in (\ref{equ89}) yields that $D^*_i$ is not a profitable deviation for player $i$
if and only if $P_i(K_j\mid\omega)\ge f_i(\omega)$, which is condition (b). Since $f_i(\omega)>1$ for $\omega \notin \Lambda_i$,
this also implies that $K_i \subseteq \Lambda_i$, which is condition (a).

Now assume that $\omega\notin K_i$.
In this case under $\eta^*_i(K_1,K_2)$ player $i$ follows the course of action $D^*$.
If player $i$ follows $\eta^*_i(K_1,K_2)$ at the first stage, then from the second stage on player $j$ will play $D$,
and then the best reply of player $i$ is to play $D$.

If $\omega\not\in K_j$, player $j$ follows $D^*$,
and $\eta^*_i(K_1,K_2)$ is player $i$'s best response.
If $\omega \in K_j$, player $j$ follows $GT^*$,
and therefore if there is a profitable deviation to player $i$,
he must deviate at the first stage.
Thus, to check whether $\eta^*(K_1,K_2)$ is a Bayesian equilibrium we need to check whether
$\gamma_i(\eta^*(K_1,K_2)) \ge \gamma_i(GT^*_i,\eta^*_j(K_1,K_2))$,
where $GT^*_i$ is the grim trigger course of action of player $i$.

One can verify that
$$
\gamma_i(\eta^*(K_1,K_2)\mid\omega)=(1-P_i(K_j\mid\omega))\frac{1}{1-\lambda_i(\omega)}+
P_i(K_j\mid\omega)\left(4+\frac{\lambda_i}{1-\lambda_i}\mid\omega\right)
$$
and
$$
        \gamma_i(C_i^*,\eta_j^*(K_j)\mid\omega)=P_i(K_j\mid\omega)\frac{3}{1-\lambda_i(\omega)}+
        (1-P_i(K_j\mid\omega))\left(\frac{\lambda_i(\omega)}{1-\lambda_i(\omega)}\right).
$$
The inequality $\gamma_i(\eta^*(K_1,K_2)) \geq \gamma_i(GT^*_i,\eta^*_j(K_1,K_2))$ then reduces to
$P_i(K_j\mid\omega)\le f_i(\omega)$, and condition (c) is obtained.

Finally we need to verify
that $f_i$ is $\Sigma_i$-measurable. This function is a rational function of
$\lambda_i$, and therefore it is a Borel function of $\lambda_i$.
Since $\Sigma_i$ contains the sets $\{\omega\mid\lambda_i(\omega)\in B\}$,
for every open set $B\subseteq[0,1)$, $f_i$, as a function of
$\omega$, is indeed $\Sigma_i$-measurable.

\end{document}